\newcommand{\defend}{\strut \hfill $\lrcorner$}
\newcommand{\proofsize}{}
\newcommand{\proofend}{}
\newcommand{\showpicture}[1]{#1}
\newcommand{\extakeout}[2]{#1}
\newcommand{\netonesize}{17}
\newcommand{\nettwosize}{11}
\newcommand{\netthreesize}{13}
\newcommand{\trpn}[2]{#1}
\newcommand{\minfigspace}{}
\newcommand{\preset}[1]{\ensuremath{{\bullet #1}}}
\newcommand{\postset}[1]{\ensuremath{{#1\bullet}}}
\newcommand{\sources}[1]{\ensuremath{\textit{Src}(#1)}}
\newcommand{\step}[3]{\ensuremath{{#1}\overset{#2}{\rightarrow}{#3}}}
\newcommand{\reachable}[2]{\ensuremath{{#1}{\rightarrow^*}{#2}}}
\newcommand{\act}[2]{\ensuremath{{\overset{*}{#1}_{#2}}}}
\newcommand{\incoming}[2]{\ensuremath{{\triangleright%
\ifx\\#2\\%
\else
_{\scriptscriptstyle{(#2)}}
\fi #1}}}
\newcommand{\outgoing}[2]{\ensuremath{{#1 \triangleleft%
\ifx\\#2\\%
\else
_{\scriptscriptstyle{(#2)}}
\fi}}}
\newcommand{\Loops}[1]{\ensuremath{\textit{Loops}(#1)}}
\newcommand{\Entries}[1]{\ensuremath{\textit{Entries}(#1)}}
\newcommand{\LoopIn}[2]{\ensuremath{\textit{LoopIn}%
\ifx\\#2\\%
\else
_{\scriptscriptstyle{(#2)}}
\fi%
(#1)}}
\newcommand{\Exits}[1]{\ensuremath{\textit{Exits}(#1)}}
\newcommand{\LoopOut}[2]{\ensuremath{\textit{LoopOut}%
\ifx\\#2\\%
\else
_{\scriptscriptstyle{(#2)}}
\fi%
(#1)}}
\newcommand{\Paths}[4]{\ensuremath{\textnormal{\textit{Paths}}%
\ifx\\#3\\%
\else
_{\scriptscriptstyle{(#3)}}
\fi%
(#1,#2)}}
\newcommand{\DoBody}[1]{\ensuremath{\textit{DoBody}(#1)}}
\newcolumntype{L}[1]{>{\raggedright\arraybackslash}p{#1}}
\newcolumntype{C}[1]{>{\centering\arraybackslash}p{#1}}
\newcolumntype{R}[1]{>{\raggedleft\arraybackslash}p{#1}}
\newcommand{\ProofLine}[3]{\noindent \begin{tabular}[t]{R{0.2\textwidth}cL{0.6\textwidth}cC{0.1\textwidth}}%
#1 && #2 && %
\ifx\\#3\\%
\else
 (#3)
\fi%
\end{tabular} \vspace{0.2mm}}
\newcommand{\titletext}[0]{Pushing the Limits: Concurrency Detection in \\Acyclic Sound Free-Choice Workflow Nets in $O(P^2 + T^2)$}
\newcommand{\abstracttext}[0]{Concurrency is an important aspect of Petri nets to describe and simulate the behavior of complex systems. 
Knowing which places and transitions could be executed in parallel helps to understand nets and enables analysis techniques and the computation of other properties, such as causality, exclusivity, etc.. 
All techniques based on concurrency detection depend on the efficiency of this detection methodology. 
Kovalyov and Esparza have developed algorithms that compute all concurrent places in $O\big((P+T)TP^2\big)$ for live and bounded nets (where $P$ and $T$ are the numbers of places and transitions) and in $O\big(P(P+T)^2\big)$ for live and bounded free-choice nets. 
Although these algorithms have a reasonably good computational complexity, large numbers of concurrent pairs of nodes may still lead to long computation times. 
This paper complements the palette of concurrency detection algorithms with the \emph{Concurrent Paths} (CP) algorithm for sound free-choice workflow nets. 
The algorithm allows parallelization and has a worst-case computational complexity of $O(P^2 + T^2)$ for acyclic nets and of $O(P^3 + PT^2)$ for cyclic nets. 
Although the computational complexity of cyclic nets has not improved, the evaluation shows the benefits of \emph{CP}, especially, if the net contains many nodes in concurrency relation.}
\newtheorem{theorem}{Theorem}[section]
\newtheorem{corollary}[theorem]{Corollary}
\newtheorem{observation}{Observation}
\newtheorem{definition}{Definition}
\numberwithin{equation}{section}
\title{\titletext}
\newbox{\orcid}\sbox{\orcid}{\includegraphics[scale=0.06]{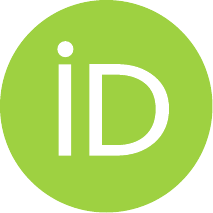}} 
\author[1]{%
	\href{https://orcid.org/0000-0001-9602-3482}{\usebox{\orcid}\hspace{1mm}Thomas M.~Prinz}%
}
\author[2]{%
	\href{https://orcid.org/0000-0002-1498-2653}{\usebox{\orcid}\hspace{1mm}Julien~Klaus}%
}
\author[3]{%
	\href{https://orcid.org/0000-0003-3199-1604}{\usebox{\orcid}\hspace{1mm}Nick R.T.P.~van~Beest}%
}
\affil[1]{Course Evaluation Service, Friedrich Schiller University Jena, Jena, Germany.}
\affil[2]{Faculty of Mathematics and Computer Science, Friedrich Schiller University Jena, Jena, Germany.}
\affil[3]{Data61, Commonwealth Scientific and Industrial Research Organisation (CSIRO), Brisbane, Australia.\newline\texttt{\{Thomas.Prinz,Julien.Klaus\}@uni-jena.de}, \texttt{Nick.VanBeest@data61.csiro.au}}
\begin{document}
\maketitle

\begin{abstract}
	\abstracttext
\end{abstract}

\keywords{Concurrency detection \and Workflow nets \and Free-Choice \and Soundness}


\section{Introduction}
\label{sec:Introduction}

Petri nets are a popular and well-studied notion to describe, investigate, revise, and analyze complex system behavior. Especially in the area of information systems and business process management, Petri nets are commonly used to model business processes. Instead of classical control-flow graphs resulting from procedural programming languages, nets are able to model concurrent behavior. Of course, although concurrency is an important aspect of nets (and the business process models they may represent), it complicates their analysis. The detection of concurrent places and transitions of nets helps to understand the behavior of a system. It further enables the measurement of similarity of nets with behavior \cite{DBLP:journals/cii/DijkmanRR12}, the derivation of other behavioral relations, such as e.\,g., causality, exclusivity~\cite{DBLP:conf/apn/PolyvyanyyWCRH14,DBLP:journals/tse/WeidlichMW11}, etc. Some ongoing analysis, therefore, depends on concurrency detection and its efficiency. 

Kovalyov \cite{DBLP:conf/apn/Kovalyov92} has proposed a quintic time algorithm in the number of nodes that computes all concurrent places. Kovalyov and Esparza \cite{KovalyovEsparza} revised that algorithm to have a time complexity of $O\big((P + T)TP^2\big)$ for live and bounded nets and of $O\big(P(P + T)^2\big)$ for live and bounded free-choice nets ($P$ is the set of \emph{places} and $T$ is the set of \emph{transitions} of a net). Both algorithms are well-suited and efficient for their respective problem classes of nets. However, if a net contains many concurrent nodes, the computation time increases significantly during evaluation. 

Decomposition of the net can be used to allow a parallel computation of the concurrency relation and to accelerate the computational method. A decomposition into single-entry single-exit (SESE) fragments helped to speed-up the computation in Weidlich et al. \cite{DBLP:conf/apn/WeidlichPMW10} and Ha and Prinz \cite{DBLP:journals/access/HaP21a}. However, the SESE decomposition approach fails if the net contains inherently unstructured fragments (known as rigids). Weber et al. \cite{DBLP:journals/dpd/WeberHM10} proposed a quadratic time variant to compute concurrent nodes. Their algorithm requires a low number of nodes in the pre- and postset as well as only simple cycles (loops). For other cases, SESE decomposition and the approach of Weber et al. have to utilize the algorithm of Kovalyov and Esparza or more general techniques such as state-space exploration or finite complete prefix unfolding \cite{DBLP:journals/fmsd/EsparzaRV02}. Therefore, these approaches are again at least cubic in time complexity or worse for nets with arbitrary loops.

To overcome these limitations, this paper presents a new algorithm, called the \emph{Concurrent Paths} (CP) algorithm, which is applicable to \emph{sound free-choice workflow nets}. For acyclic nets, it has a quadratic worst-case computation complexity of $O(P^2 + T^2)$ and, for cyclic nets, the worst-case complexity increases to a cubic algorithm of $O(P^3 + PT^2)$. 
The \emph{CP} algorithm is well parallelizable. In addition, the worst-case complexity occurs relatively infrequently, as it depends on the overall number of loops (incl. nested loops) in a net, which is usually small. If this number of loops can be interpreted as a constant, the complexity reduces to be quadratic in average. 
The restriction to connected workflow nets with single start and end places serves to introduce the method, but unconnected nets with several start and end places are also possible. In summary, the \emph{CP} algorithm complements the palette of concurrency detection algorithms for the special case of sound free-choice workflow nets. \trpn{}{A slightly extended version of this paper can be found as a technical report \cite{Prinz2024Preprint}.}

The remainder of the paper is structured as follows. Section~\ref{sec:Preliminaries} explains basic notions, especially nets, paths, loops, markings, and semantics. Subsequently, the concept of concurrency and the algorithm of Kovalyov and Esparza are introduced and described in Section~\ref{sec:Concurrency}. This is followed by revisions of their algorithm to the quadratic time \emph{CP} algorithm for acyclic nets in Section~\ref{sec:AccyclicNets}. Section~\ref{sec:CyclicNets} extends the algorithm to the \emph{Concurrent Paths} algorithm being able to handle cyclic nets. A short outlook in case of inclusive semantics, which is especially used in business process models, is discussed in Section~\ref{sec:InclusiveSemantics}. The evaluation in Section~\ref{sec:Evaluation} demonstrates the strengths and weaknesses of the \emph{CP} algorithm compared to the Kovalyov and Esparza algorithm for live and bounded free-choice nets. Finally, Section~\ref{sec:Conclusion} summarizes the results and provides directions for future work.


\section{Preliminaries}
\label{sec:Preliminaries}

This work is based on well-known definitions of Petri nets. Readers already familiar with these concepts may proceed directly to Section~\ref{subsec:PathToEnd}. However, a quick review is encouraged to ensure alignment with this paper's specific notions.

\subsection{Nets, Paths, Loops, and Workflow Nets}

\begin{definition}[Petri net]
\label{def:PetriNet}
A \emph{Petri net} is a triple $N = (P,T,F)$ with $P$ and $T$ are finite, disjoint sets of \emph{places} and \emph{transitions} and $F \subseteq (P \times T) \cup (T \times P)$ is the \emph{flow} relation. \hfill \defend
\end{definition}

The union $P \cup T$ of a net $N = (P,T,F)$ can be interpreted as \emph{nodes} and $F$ as \emph{edges} between those nodes. For $x \in P \cup T$, $\preset{x} = \{p \; | \; (p,x) \in F \}$ is the \emph{preset} of $x$ (all directly preceding nodes) and $\postset{x} = \{s \; | \; (x,s) \in F\}$ is the \emph{postset} of $x$ (all directly succeeding nodes). Each node in $\preset{x}$ is an \emph{input} of $x$ and each node in $\postset{x}$ is an \emph{output} of $x$. The preset and postset of a set of nodes $X \subseteq P \cup T$ is defined as $\preset{X} = \bigcup_{x \in X} \preset{x}$ and $\postset{X} = \bigcup_{x \in X} \postset{x}$, respectively. 

\begin{definition}[Path]
\label{def:Path}
A \emph{path} $W=(n_1,\ldots,n_m)$ of a net $N = (P,T,F)$ is a sequence of nodes $n_1,\ldots,n_m \in P \cup T$ with $m \geq 1$ and $\forall i \in \{ 1, \ldots, m-1 \}\colon \; n_i \in \preset{n_{i+1}}$. \hfill \defend
\end{definition}

Of course, places and transitions alternate on paths. Nodes are part of a path $A = (x,\ldots,y)$, depicted $x,\ldots,y \in A$. If all nodes of a path are pairwise different, the path is \emph{acyclic}; otherwise, it is \emph{cyclic}. The acyclic path $A$ is sometimes used as the set $\{x,\ldots,y\}$. $\Paths{x}{y}{}{a}$ denotes the set of all \emph{acyclic} paths between nodes $x$ and $y$, where $x,y \in P \cup T$. Two paths $A = (x,\ldots,y)$ and $B = (y,\ldots,z)$ can be \emph{concatenated} to a new path $A + B = (x,\ldots,y) + (y,\ldots,z) = (x,\ldots,y,\ldots,z)$. If $A$ and $B$ are acyclic and $A \cap B =\{ y \}$, then $A + B$ is acyclic since all nodes of $A + B$ are pairwise disjoint. A net $N$ is \emph{cyclic} if there is at least one node $x \in P \cup T$ that has a non-trivial path ($\not = (x)$) to itself.

Cyclic nets contain at least one \emph{loop}. A loop is a subgraph of the net in which each node is reachable from any other node, which can be formally defined as follows:

\begin{definition}[Loop]
\label{def:Loop}
\normalfont
A \emph{loop} $L=(P_L,T_L,F_L)$ of a net $N=(P,T,F)$ is a \emph{strongly connected component} of $N$, i.\,e., $L$ is a maximal subgraph of $N$, such that $P_L \subseteq P$, $T_L \subseteq T$, and $F_L \subseteq F$ \cite{DBLP:books/daglib/0023376}. We denote the set of all loops of $N$ with $\Loops{N}$, such that $\Loops{N} = \emptyset$ for acyclic nets. We further define $\Entries{L}$ as the set $\{ l \in (P_L \cup T_L)\colon \; \preset{l} \not \subseteq (P_L \cup T_L) \}$ of \emph{loop entries} of $L$, and $\Exits{L}$ as the set $\{ l \in (P_L \cup T_L)\colon \; \postset{l} \not \subseteq (P_L \cup T_L) \}$ of \emph{loop exits} of $L$. All flows in $\{(o,l) \in F\colon \; o \notin (P_L \cup T_L) \; \land \; l \in (P_L \cup T_L)\}$ are \emph{loop-entry flows} and all flows in $\{(l,o) \in F\colon \; l \in (P_L \cup T_L) \; \land \; o \notin (P_L \cup T_L)\}$ are \emph{loop-exit flows} of $L$. \hfill \defend
\end{definition}

Each net in this paper is restricted to be \emph{free-choice}: $\forall p \in P\colon \; |\postset{p}| > 1 \; \longrightarrow \; \preset{(\postset{p})} = \{ p \}$. 
Visualized nets have circles representing places, rectangles representing transitions, and directed edges representing flows (see Figure~\ref{fig:ExamplePN}).

\begin{definition}[Workflow and AFW-net]
\label{def:WorkflowNet}
\normalfont
A \emph{workflow net} $WN = (P,T,F,i,o)$ is a net $(P,T,F)$ with $i,o \in P$, $\preset{i} = \emptyset$, and $\postset{o} = \emptyset$. $i$ is the \emph{source} and $o$ is the \emph{sink} of $WN$. All nodes are on a path from $i$ to $o$. If $WN$ is free-choice, we call it \emph{FW-net}. If $WN$ is acyclic and free-choice, we call it \emph{AFW-net}. \hfill \defend
\end{definition}

\noindent The visualized net in Figure~\ref{fig:ExamplePN} is a workflow (FW-)net as well as an AFW-net.

\begin{figure}[tb]
	\centering
		\includegraphics[width=0.50\textwidth]{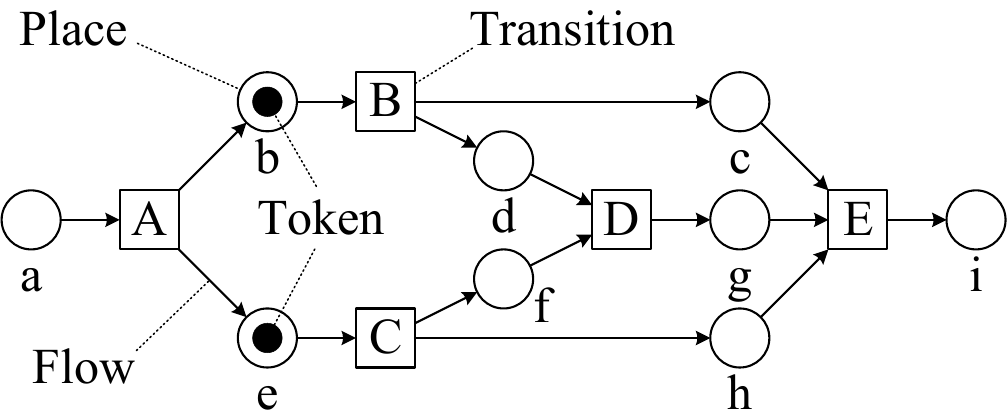}
	\caption{A graphical example of a Petri net.}
	\label{fig:ExamplePN}
\end{figure}

\subsection{Markings, Semantics, Reachability, and Soundness}

The state of a net is described by a so-called \emph{marking}, which specifies the number of \emph{tokens} in each place.

\begin{definition}[Marking]
\label{def:Marking}
A \emph{marking} of a net $N = (P,T,F)$ is a total mapping $M\colon P \mapsto \mathbb{N}_0$ that assigns a natural number (inclusively 0) of \emph{tokens} to each place $P$. $M(p) = 1$ means that place $p \in P$ carries 1 token in marking $M$. The mapping $M$ is sometimes used as the set $\{ p \in P\colon M(p) \geq 1 \}$. \hfill \defend
\end{definition}

An \emph{initial} marking of a net is a special marking with at least one place having a token. The initial marking $M_i$ of a workflow net $WN=(P,T,F,i,o)$ is a marking $\{ i \}$ where only the source $i$ of a net contains exactly one token. A \emph{terminal} marking $M_o$ is a marking $\{ o \}$ of $WN$ where only the sink $o$ contains exactly one token. Transitions whose input places all contain at least one token are \emph{enabled} in a marking and can be fired. This leads to the execution semantics of a net:

\begin{definition}[Execution semantics]
\label{def:Semantics}
Let $N = (P,T,F)$ be a net with a marking $M$. A transition $t \in T$ is \emph{enabled} in $M$ iff every input place of $t$ contains at least one token, $\forall p \in \preset{t}\colon \; M(p) \geq 1$. If $t$ is enabled in $M$, then $t$ can \emph{occur} (``fire''), which leads to a \emph{step} from $M$ to $M'$, denoted $\step{M}{t}{M'}$, with 
\begin{equation}
	M'(p) = M(p) - \begin{cases} 
		1, \; p \in \preset{t} \\ 
		0, \; \text{else} 
	\end{cases} + \begin{cases} 
	  1, \; p \in \postset{t} \\ 
		0, \; \text{else.} 
	\end{cases}
\end{equation}
I.\,e., in a step via $t$, $t$ ``consumes'' one token from all its input places and ``produces'' one token for all its output places. \hfill \defend
\end{definition}

Stepwise firings of transitions lead to chains of fired transitions, which describe the behavior of a net as occurrence sequences:

\begin{definition}[Occurrence Sequences and Reachability]
\label{def:OccurrenceExecutions}
Let $N=(P,T,F)$ be a net with a marking $M_0$. A sequence of transitions $\sigma = \langle t_1,\ldots,t_n \rangle$, $n \in \mathbb{N}_0$, $t_1,\ldots,t_n \in T$, is an \emph{occurrence sequence} of $M_0$ iff $\step{M_0}{t_1}{\step{M_1}{t_2}{\step{\ldots}{t_n}{M_n}}}$ is valid for $M_1,\ldots,M_n$. It can be said that $\sigma$ \emph{leads} from $M_0$ to $M_n$. 

A marking $M'$ is \emph{reachable} from a marking $M$ (denoted $\reachable{M}{M'}$) iff there is an occurrence sequence $\sigma$ of $M$ that leads to $M'$. \hfill \defend
\end{definition}

\noindent Important properties of nets are \emph{liveness} and \emph{boundedness}:

\begin{definition}[Liveness and Boundedness]
A net $N=(P,T,F)$ with its initial marking $M_0$ is \emph{live} iff for every reachable marking $M$, $\reachable{M_0}{M}$, and every $t \in T$, there is a reachable marking $M'$, $\reachable{M}{M'}$, which enables $t$.

$N$ is $n$-\emph{bounded} iff there exists a number $n \in \mathbb{N}_0$ such that for every reachable marking $M$, $\reachable{M_0}{M}$, and for every place $p \in P$ it holds that the number of tokens at $p$ is at most $n$: $\forall p \in P\colon \; M(p) \leq n$. $N$ is \emph{safe} iff it is 1-bounded. \hfill \defend
\end{definition}

\noindent \emph{Soundness} describes an important property of workflow nets:

\begin{definition}[Soundness]
\label{def:Soundness}
\normalfont
A workflow net $WN=(P,T,F,i,o)$ with its initial marking $M_i$, its terminal marking $M_o$, and a reachable state $\reachable{M_i}{M}$ is \emph{sound} iff %
\begin{enumerate}[label=(\arabic*)]
	\item $\reachable{M}{M_o}$,
	\item if $M(o) \geq 1$, then $M = M_o$, and
	\item there is no \emph{dead} transition in $WN$: $\quad \forall t \in T \; \exists M,M'\colon \; \reachable{M_i}{\step{M}{t}{M'}}$. \cite{DBLP:conf/apn/Aalst97} \hfill \defend
\end{enumerate}
\end{definition}

\begin{theorem}
Let $WN=(P,T,F,i,o)$ be a FW-net and $\overline{WN} = \big(P,T \cup \{t'\}, F \cup \{(f,t'), (t',s)\} \big)$ be its ``short-circuit'' net. %
\begin{equation}
	WN \text{ is sound} \quad \Longleftrightarrow \quad \overline{WN} \text{ is live and bounded} \tag*{\defend}
\end{equation}
\end{theorem}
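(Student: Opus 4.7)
The plan is to prove the two directions separately; this is a classical result of van der Aalst, and the unifying observation is that firing the added transition $t'$ in $\overline{WN}$ deposits a token from $o$ back to $i$, so every firing sequence of $\overline{WN}$ decomposes into runs of $WN$ starting from $M_i$ separated by $t'$-resets. In particular, every marking reachable in $\overline{WN}$ from $M_i$ is also reachable in $WN$ from $M_i$.

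For ($\Rightarrow$), assume $WN$ is sound. To show liveness of $\overline{WN}$, I pick any reachable marking $M$ and any transition $u \in T \cup \{t'\}$; by the observation above $M$ is reachable in $WN$, Definition~\ref{def:Soundness}~(1) supplies an occurrence sequence leading $M$ to $M_o$, firing $t'$ reaches $M_i$, and clause~(3) yields a sequence from $M_i$ that eventually enables $u$ (the case $u = t'$ is handled by clause~(1) alone). To show boundedness of $\overline{WN}$, I argue by contradiction using the monotonicity of Petri net firings: were $\overline{WN}$ unbounded, there would exist reachable markings $M_1 < M_2$ with $M_2$ reachable from $M_1$ via some sequence $\sigma$, so iterating $\sigma$ would generate arbitrarily large reachable markings; applying clause~(1) to one of them, combined with monotonicity, would force a reachable marking strictly dominating $M_o$, contradicting clause~(2).

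For ($\Leftarrow$), assume $\overline{WN}$ is live and bounded. Clause~(3) of Definition~\ref{def:Soundness} is immediate, since liveness of every $t \in T$ in $\overline{WN}$ restricts to liveness in $WN$. For clause~(1), liveness of $t'$ guarantees that from any $M$ reachable in $WN$ there is an extension in $\overline{WN}$ enabling $t'$, hence a reachable marking with a token in $o$; boundedness (and in particular safeness of the place $o$, derived from the single-token initial marking) then forces this marking to equal $M_o$. Clause~(2) follows similarly: any reachable $M$ with $M(o) \geq 1$ and $M \neq M_o$ would, via further firings followed by repeated $t'$-resets, permit unbounded accumulation of tokens, again contradicting boundedness.

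The hardest step, I anticipate, is boundedness in the ($\Rightarrow$) direction: Definition~\ref{def:Soundness} is purely semantic and supplies no explicit numerical bound, so extracting one requires combining the monotonicity lemma for Petri net firings with the impossibility (via clause~(2)) of strictly super-terminal markings. The backward direction is more routine once safeness of $o$ has been isolated from boundedness, but some care is needed to avoid circularity when using (2) to establish (1) and vice versa.
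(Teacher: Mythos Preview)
The paper does not actually prove this theorem: its entire proof is the single line ``See~\cite{DBLP:conf/apn/Aalst97}.'' You have instead sketched van der Aalst's original argument, and your outline is essentially correct, so there is no structural comparison to make---you have supplied what the paper deliberately delegates to the literature.

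Two places in your sketch deserve tightening. In the boundedness step of ($\Rightarrow$), iterating $\sigma$ is unnecessary and your phrase ``applying clause~(1) to one of them'' blurs the roles: from reachable $M_1 \lneq M_2$, apply clause~(1) to $M_1$ to obtain a sequence $\tau$ with $M_1 \xrightarrow{\tau} M_o$, then fire $\tau$ from $M_2$ by monotonicity to reach $M_o + (M_2 - M_1) \gneq M_o$, contradicting clause~(2) directly. In ($\Leftarrow$), the appeal to ``safeness of $o$, derived from the single-token initial marking'' is not justified---boundedness yields only $n$-boundedness for some $n$, not safeness. The clean route is to establish clause~(2) first (if a reachable $M$ has $M(o)\geq 1$ and $M\neq M_o$, fire $t'$ to obtain $M'\gneq M_i$, then monotonicity plus iteration contradicts boundedness) and only afterwards derive clause~(1) from liveness of $t'$ together with the already-proven clause~(2); you should also note that the occurrence sequence enabling $t'$ may be taken without itself firing $t'$ (otherwise truncate it at the first such firing), so that the path from $M$ to $M_o$ lies entirely in $WN$. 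You already flag the ordering concern yourself, so this is a small reorganisation rather than a genuine gap.
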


\begin{proof}
See \cite{DBLP:conf/apn/Aalst97}.
\end{proof}

\subsection{Path-to-End Theorem}
\label{subsec:PathToEnd}

To simplify proofs in the rest of this paper, it uses the \emph{Path-to-End Theorem}. It states that in sound FW-nets, there is no reachable marking, in which at least two tokens are on an acyclic path to the sink. 

\begin{theorem}[Path-to-End Theorem]
\label{theorem:PathToEndTheorem}
Let $WN=(P,T,F,i,o)$ be a FW-net with its initial marking $M_i$. %
\begin{gather}
WN \text{ is sound} \nonumber \\ 
\Longrightarrow \\
\forall p \in P \; \forall W \in \Paths{p}{o}{}{a} \; \forall M, \; \reachable{M_i}{M}\colon \quad |M \cap W| \leq 1 \nonumber \tag*{ \defend}
\end{gather}
\end{theorem}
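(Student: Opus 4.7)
The plan is to argue by contradiction, combining the short-circuit characterization of soundness with the classical $S$-coverability of live and bounded free-choice nets. Assume a reachable marking $M$ and an acyclic path $W$ from some $p$ to $o$ with $|M\cap W|\geq 2$; I will extract a structural invariant that contradicts this.

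First, by the short-circuit theorem just above, $WN$ sound is equivalent to $\overline{WN}$ live and bounded; a standard consequence (see \cite{DBLP:conf/apn/Aalst97}) is that sound FW-nets are in fact safe, so every reachable marking assigns at most one token per place, and in particular $M\cap W$ is a genuine set of singly-marked places. Next, by the classical $S$-coverability theorem for live and bounded free-choice nets (\cite{DBLP:books/daglib/0023376}), $\overline{WN}$ is covered by \emph{$S$-components}: strongly connected state-machine subnets in which each transition has exactly one input place and one output place inside the subnet, and each such $S$-component carries exactly one token in every reachable marking.

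Second, I extend the path $W=(p,t_1,p_2,\ldots,o)$ to a closed walk in $\overline{WN}$ by appending the short-circuit transition $t'$ from $o$ back to $i$ together with any $i$-to-$p$ path in $WN$ (which exists because every node of a workflow net lies on some $i$-to-$o$ path). I then show that this closed walk sits inside a single $S$-component $S$ of $\overline{WN}$. The key observation is that for every transition $t_k$ on $W$, free-choice combined with acyclicity of $W$ forces exactly one place of $\preset{t_k}$ and exactly one place of $\postset{t_k}$ to lie on $W$: if $|\preset{t_k}|>1$, free-choice makes each input of $t_k$ have $\{t_k\}$ as its entire postset, so only one such input can appear on an acyclic path passing through $t_k$; an analogous argument handles outputs. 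Hence the single-input/single-output selection required by an $S$-component is consistent along the whole walk, so the walk can be closed under the $S$-component conditions. Since $W$ then lies inside $S$, and $S$ carries exactly one token under every reachable marking, the acyclicity of $W$ gives $|M\cap W|\leq|M\cap P_S|=1$ for every reachable $M$, contradicting the assumption.

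The main obstacle is the embedding step: formally constructing the $S$-component containing the closed walk around $W$. Beyond the per-transition consistency sketched above, one has to verify strong connectedness and the subnet closure under transition pre-/postsets, all of which rely on free-choice together with the fact that $\overline{WN}$ is live and bounded. If that embedding turns out to be awkward to write directly, a fallback is to prove the theorem by induction on the length of the firing sequence from $M_i$, with the inductive step ruling out the only troublesome case---a transition whose firing places two new tokens on $W$---again by exploiting free-choice plus soundness to force a safety violation along the subpath connecting the two fresh tokens.
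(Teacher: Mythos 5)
Your route is genuinely different from the paper's, but as written it has two concrete gaps, both located exactly where the real difficulty of the theorem lies. First, the claim that ``an analogous argument handles outputs'' is false: the free-choice condition constrains \emph{places} with more than one output transition (forcing $\preset{(\postset{p})}=\{p\}$), and its consequence for a transition $t_k$ with $|\preset{t_k}|>1$ is a statement about the \emph{inputs} of $t_k$. Nothing dual holds for $\postset{t_k}$ --- a transition may have arbitrarily many output places in a free-choice net --- so free-choice plus acyclicity of $W$ does not prevent two output places of $t_k$ from both lying on $W$. Ruling that out already needs soundness (firing such a $t_k$ would put two tokens on an acyclic suffix of $W$), i.e., it is essentially an instance of the statement you are trying to prove. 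Second, and more seriously, the embedding of the closed walk into a single $S$-component is asserted, not proved, and it is the entire content of the theorem. The $S$-coverability theorem gives a \emph{cover}: every place lies in \emph{some} $S$-component. It does not let you thread one component along a prescribed path: at a transition $t_j$ of $W$ with several output places, the $S$-component containing the $W$-predecessor of $t_j$ must select exactly one output place of $t_j$, and nothing you cite forces that selection to be the one on $W$. Consecutive places of $W$ may therefore a priori end up in different components of the cover, and the inequality $|M\cap W|\le|M\cap P_S|$ never gets off the ground. Your fallback (induction on the firing sequence) is likewise only a one-sentence sketch of the hard case.

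For contrast, the paper's proof avoids structure theory entirely. It assumes two tokens on $W$, observes that firing any transition of $W$ cannot decrease the token count on $W$ (using free-choice only to bound $|\preset{t}\cap W|$), and notes that the only way to lose a token from $W$ is at a conflict place $p_w\in W$ --- where free-choice guarantees that the successor transition of $p_w$ \emph{on} $W$ is always enabled and may be fired instead. Following this firing rule preserves $|M_1\cap W|>1$ forever, so the run either reaches a dead non-terminal marking or exposes a dead transition, contradicting soundness. If you want to salvage your approach, you would need a genuine lemma of the form ``every acyclic path to $o$ in a sound FW-net extends to an $S$-component of the short-circuited net,'' proved via the allocation/siphon machinery behind the coverability theorem; that lemma is not off-the-shelf.
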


\begin{proof}
The preconditions by Theorem~\ref{theorem:PathToEndTheorem} are a FW-net $WN=(P,T,F,i,o)$ with its initial marking $M_i$. %
The proof is done by showing its contraposition: %
\begin{gather}
\exists p \in P \; \exists W \in \Paths{p}{o}{}{a} \; \exists M, \; \reachable{M_i}{M}\colon \; \; |M \cap W| > 1 \quad \Longrightarrow \quad
WN \text{ is unsound} \label{eq:1CP}
\end{gather}
The proof of the contraposition \eqref{eq:1CP} is done by its contradiction: %
\begin{gather}
\exists p \in P \; \exists W \in \Paths{p}{o}{}{a} \; \exists M, \; \reachable{M_i}{M}\colon \; \; |M \cap W| > 1 \quad \land \quad
WN \text{ is sound} \label{eq:1CD}
\end{gather}
By \eqref{eq:1CD}, let $p \in P$ be a place with an acyclic path $W \in \Paths{p}{o}{}{a}$ to $o$ and $M$ a reachable marking from $M_i$, $\reachable{M_i}{M}$, such that $|M \cap W| > 1$. %
Since $W$ is acyclic, starts in place $p$, and ends in place $o$, it is valid: %
\begin{equation}
	\forall t \in (W \cap T)\colon |\preset{t} \cap W| = 1 \quad \land \quad |\postset{t} \cap W| = 1
\end{equation}
As a consequence, firing a $t \in (W \cap T)$ in any marking $M_1$ cannot reduce the number of tokens on $W$: %
\begin{equation}
	\forall t \in (W \cap T) \; \forall M_1, \; \reachable{M}{M_1}, \; t \text{ is enabled in } M_1, \; \forall M_2, \; \step{M_1}{t}{M_2}\colon \; |M_1 \cap W| = |M_2 \cap W| \label{eq:P2E2}
\end{equation} %
Thus, the only possibility to reduce the number of tokens on $W$ is a place $p_w \in (W \cap P)$ with $|\postset{p_w}| \geq 2$. %
Since $|\postset{o}| = 0$, $p_w \not = o$. %
As a consequence: %
\begin{equation}
	\exists \{ t,t' \} \subseteq \postset{p_w}\colon \; t,t' \in T \quad \land \quad t \notin W \quad \land \quad t' \in W
\end{equation}
Let $t,t'$ be such transitions. %
Since $WN$ is free-choice, it must be valid: %
\begin{equation}
	\preset{t} = \preset{t'} = \{ p_w \} \label{eq:P2E5}
\end{equation} %
It results from \eqref{eq:P2E5}: %
\begin{equation}
	\forall p_w \in (W \cap P), \; |\postset{p_w}| \geq 2 \; \forall t \in \postset{p_w} \; \forall M_1, \; \reachable{M}{M_1}, \; M_1(p_w) \geq 1\colon \; t \text{ is enabled in } M_1 \label{eq:P2E1}
\end{equation} %
We construct an occurrence sequence starting from $M$ with a simple rule: If a $p \in (W \cap P)$ with $|\postset{p}| \geq 2$ has a token in a reachable marking $M_1$, $\reachable{M}{M_1}$, then $t \in (\postset{p} \cap W)$ fires following \eqref{eq:P2E1}. %
We depict a reachable marking under this rule with $\rightarrow^R$ instead of $\reachable{}{}$, i.\,e., $M \rightarrow^R M_1$. %
For this reason, following \eqref{eq:P2E2} and following this rule: %
\begin{equation}
  \forall M_1, \; M \rightarrow^R M_1\colon \; |M_1 \cap W| > 1 \label{eq:P2E3}
\end{equation} %
Since $WN$ is sound by \eqref{eq:1CP} and $W$ is finite, there is at least marking $M_o$, in which no transition is enabled. %
Therefore, let $M_1$ be a reachable marking under this rule, in which no transition is enabled: %
\begin{equation}
	M \rightarrow^R M_1 \; \forall t \in T\colon t \text{ \emph{not} enabled in } M_1 \label{eq:P2E4}
\end{equation} %
Note, it is \emph{not} possible that a transition in \eqref{eq:P2E4} cannot be enabled because of the applied rule. %
Since $WN$ is sound by \eqref{eq:1CD}, the marking $M_1$ by \eqref{eq:P2E4} cannot be the terminal marking $M_o$ by Def.~\ref{def:Soundness} since $|M_1 \cap W| > 1$ by \eqref{eq:P2E3}. %
Therefore, there must be at least one \emph{dead} transition on $W$. %
However, this contradicts soundness by Def.~\ref{def:Soundness}. %
As a consequence, $WN$ cannot be sound. %
This violates contradiction \eqref{eq:1CD}. $\lightning$ %
Therefore, contraposition \eqref{eq:1CP} must be valid. %
This results in the validity of Theorem~\ref{theorem:PathToEndTheorem}. $\checkmark$ \proofend
\end{proof}

\section{Concurrency and the Algorithm of Kovalyov and Esparza}
\label{sec:Concurrency}

A node $x \in P \cup T$ is called \emph{active} in a marking $M$, depicted as $\act{x}{M}$, in the following if it either contains a token ($x \in P$ and $M(x) \geq 1$) or is enabled in $M$ ($x \in T, \forall p \in \preset{x}: M(p) \geq 1$). 

\begin{definition}[Concurrency]
\label{def:Parallel}
For a given workflow net $WN=(P,T,F,i,o)$ and its initial marking $M_i$, there is a \emph{concurrency relation} $\parallel \subseteq (P \cup T) \times (P \cup T)$ with $(x,y) \in\;\parallel$, $x \not = y$, iff there is a reachable marking $M$, $\reachable{M_i}{M}$, with: %
\begin{enumerate}[label=(\arabic*)]
	\item $\act{x}{M}$ and $\act{y}{M}$, and 
	\item if $x \in T$ and $y \in P$, then $\forall z \in \preset{x}\colon \; (z,y)\in\;\parallel$, and
	\item if $x \in P$ and $y \in T$, then $\forall z \in \preset{y}\colon \; (z,x)\in\;\parallel$, and
	\item if $x \in T$ and $y \in T$, then $\forall z \in \preset{x} \; \forall z' \in \preset{y}\colon \; (z,z')\in\;\parallel$.
\end{enumerate} %
$\parallel$ is irreflexive and symmetric in each net (based on thoughts of \cite{KovalyovEsparza}). \hfill \defend
\end{definition}

\noindent The cases (2)-(4) exclude that a place is in concurrency relation with its output transitions and that two transitions are in concurrency relation, which require the same tokens from shared input places. For example, places $e$ and $d$ of Figure~\ref{fig:ExamplePN} and place $d$ and transition $C$ are in a concurrency relation, $\{ (e,d), (d,C) \} \subseteq\;\parallel$. However, $(b,B) \notin\;\parallel$.

Kovalyov and Esparza \cite{KovalyovEsparza} defined a cubic $O\big(|P|(|P| + |T|)^2\big)$ algorithm to identify the $\parallel$ relation for live and bounded free-choice nets. We refer to the algorithm as the \emph{KovEs} algorithm. In the remainder of this section, we provide an overview of its most important concepts and explain its functionality.

\begin{figure}[tb]
	\centering
		\includegraphics[width=0.50\textwidth]{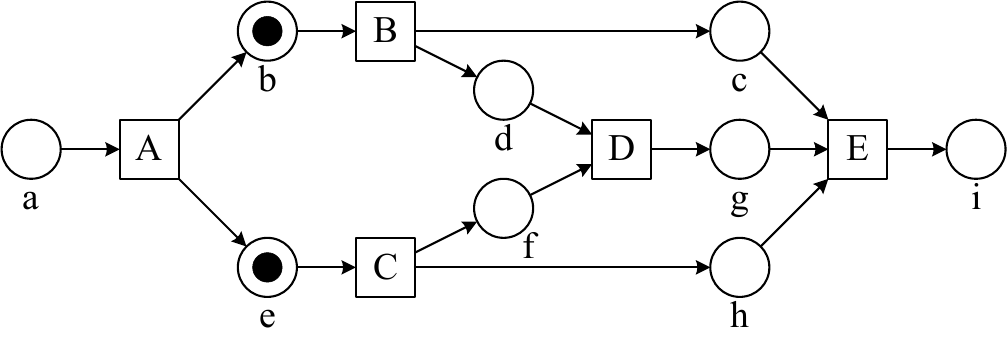}
	\minfigspace
	\caption{A net with concurrency. It holds that $(b,h) \in\;\parallel$ but $(b,g) \notin\;\parallel$.}
	\label{fig:ExamplePNconcpath}
\end{figure}

The initial step of the \emph{KovEs} algorithm is based on a simple observation in live and bounded free-choice nets $N=(P,T,F)$ with an initial marking $M_0$: all places in the postset of transitions are pairwise in a concurrency relation (except for all reflexive pairs): %
\begin{equation}
	\forall t \in T \; \forall x, y \in \postset{t}, \, x \not = y\colon \; (x,y) \in\;\parallel \quad \big( (y,x) \in\;\parallel \big)
\end{equation} %
For example, in Figure~\ref{fig:ExamplePNconcpath}, the pairs $\{ (b,e), (e,b), (c,d), (d,c), (f,h), (h,f) \} \subseteq \parallel$. Besides transitions, all \emph{source places} containing tokens in $M_0$ can be handled in the same way: %
\begin{equation}
	\forall x, y \in P, \; x \not = y, \; M_0(x) \geq 1 \leq M_0(y)\colon \; (x,y) \in\;\parallel \quad \big( (y,x) \in\;\parallel \big)
\end{equation}

The \emph{KovEs} algorithm extends an initial set of the concurrency relation $R$ by considering each already detected pair $(x,y) \in\;\parallel$. Since $(x,y) \in\;\parallel$, $x$ ($y$) may also be concurrent to nodes in $\postset{y}$ ($\postset{x}$). For example, $(b,C)$ is a new candidate for $(b,e)$ in Figure~\ref{fig:ExamplePNconcpath}. For $x$, this means that $\{ (x,s)\colon \; s \in \postset{y} \}$ are new candidate pairs to be concurrent. For each candidate $(x,s)$, there are exactly two cases: %
\begin{enumerate*}[label=\textbf{(\arabic*)}]
	\item $s$ is a place ($s \in P$ and $y \in T$), or 
	\item $s$ is a transition ($s \in T$ and $y \in P$).
\end{enumerate*}
Figure~\ref{fig:ExamplePNconcalgo} visualizes both cases, where node $x$ is visualized as an octagon since it could be either a place or transition. 

\begin{figure}[tb]
	\centering
		\includegraphics[width=0.80\textwidth]{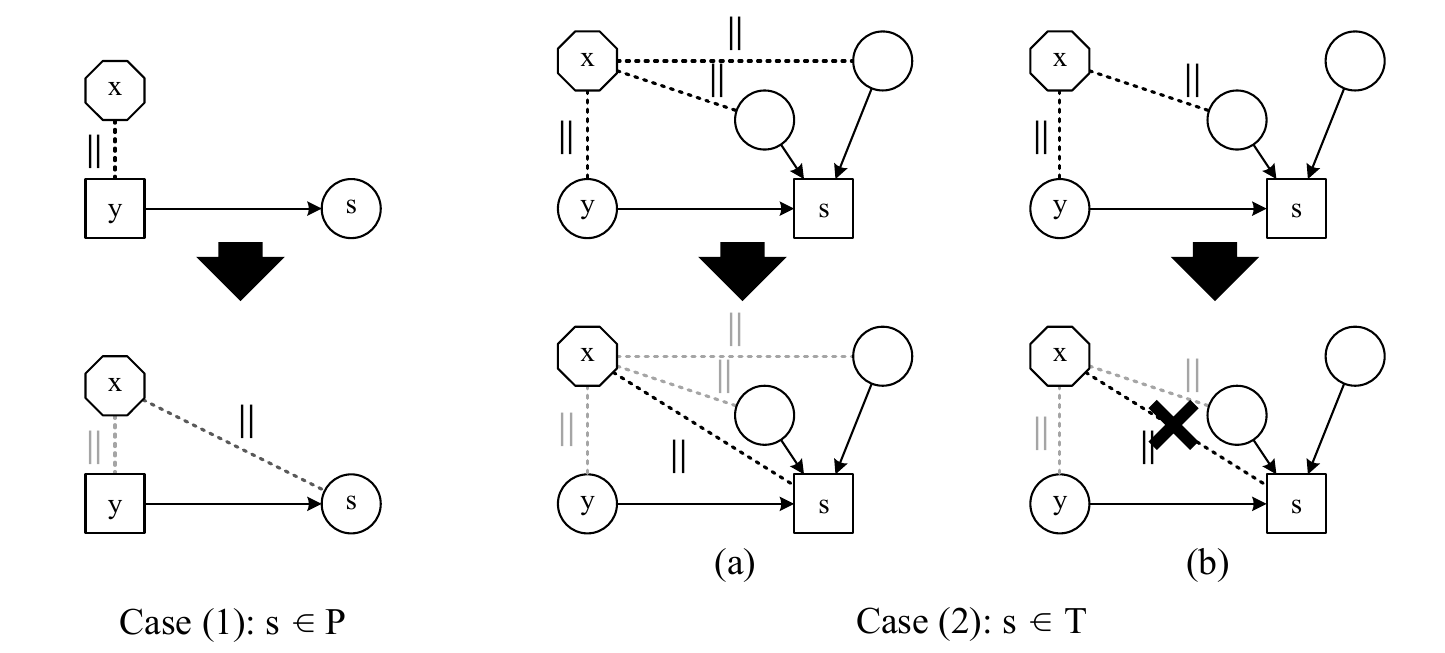}	
	\caption{For a pair $(x,y) \in\;\parallel$, there are two cases of candidate for $(x,s)$, $s \in \postset{y}$: $s \in P$ or $s \in T$. $x$ is visualized as an octagon since it could be either a place or transition. The dotted lines highlight nodes being in a concurrency relation. Arcs denote ordinary flow relations. Above the big black arrows, there are the situations \emph{before} new pairs in relation were derived. Below the black arrows, there are situations with newly derived relations (in black). Note that in Case (2) (b) the relation cannot be derived.}
	\label{fig:ExamplePNconcalgo}
\end{figure}

Consider case \textbf{(1)}: Following from Def.~\ref{def:Parallel} of concurrency, there is a reachable marking $M$ from $M_0$, $\reachable{M_0}{M}$, in which $x$ and $y$ are active, thus, $y$ is enabled. In a step where $y$ fires in $M$, $\step{M}{y}{M'}$, $s$ contains a token in the resulting marking $M'$ and $x$ is still active in $M'$: $x$ and $s$ are active in $M'$. For this reason, for case \textbf{(1)}, all nodes in $\postset{y}$ are concurrent to $x$, leading to the following observation already used in~\cite{KovalyovEsparza}:

\begin{observation}
\label{obs:Case1}
Let $N=(P,T,F)$ be a live and bounded free-choice net. %
\begin{gather}
	\forall x \in (P \cup T) \; \forall  y \in T\colon \nonumber \\
	(x,y) \in\;\parallel \quad \longleftrightarrow \quad \forall s \in \postset{y}\colon (x,s) \in\;\parallel \tag*{\defend}
\end{gather}
\end{observation}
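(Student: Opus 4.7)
Since Observation~\ref{obs:Case1} is an iff, I will argue the two directions separately; the forward direction is a direct ``fire $y$'' argument, while the reverse direction is where the genuine work (and the use of liveness, boundedness, and the free-choice property) is required.

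For the forward direction ($\Rightarrow$), assume $(x,y) \in\;\parallel$. By Def.~\ref{def:Parallel} there is a reachable marking $M$ in which $x$ and $y$ are both active, and since $y \in T$, ``active'' means ``enabled''. I plan to fire $y$ once, obtaining $M'$ with $\step{M}{y}{M'}$, and use $M'$ as the common witness for every pair $(x,s)$ with $s \in \postset{y}$. By construction every $s \in \postset{y}$ satisfies $M'(s) \geq 1$, so $s$ is active in $M'$. To see that $x$ is still active in $M'$ I distinguish two cases: if $x \in P$, then condition~(3) of Def.~\ref{def:Parallel} applied to $(x,y)$ together with irreflexivity of $\parallel$ rules out $x \in \preset{y}$, so $M'(x) \geq M(x) \geq 1$; if $x \in T$, then condition~(4) together with irreflexivity forces $\preset{x} \cap \preset{y} = \emptyset$, so firing $y$ can only add tokens to $\preset{x}$ and keeps $x$ enabled. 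It remains to check the side-conditions of Def.~\ref{def:Parallel} for $(x,s)$: (1) is witnessed by $M'$; since $s \in P$, (3) and (4) are vacuous; and (2) is non-trivial only if $x \in T$, where it requires $(z,s) \in\;\parallel$ for all $z \in \preset{x}$; this follows because in $M'$ both $z$ and $s$ are marked places, so $M'$ itself witnesses $(z,s) \in\;\parallel$ with no further conditions to verify.

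The reverse direction ($\Leftarrow$) is the main obstacle. Given $\forall s \in \postset{y}\colon (x,s) \in\;\parallel$, each $s$ supplies its own witness marking $M_s$, but what I need is a single marking in which $x$ is active and, simultaneously, every place of $\preset{y}$ is marked (so that $y$ is enabled). My plan is to fix some $s_1 \in \postset{y}$ with witness $M_1$, and then ``rewind'' the token on $s_1$ back to a marking that enables $y$ while keeping $x$ active. Liveness and boundedness guarantee that from any reachable marking there is a firing sequence that enables $y$, and the free-choice property is what lets me re-use the conditions inherited from $(x,s_1) \in\;\parallel$ to protect the activity of $x$ during this rewinding: whenever a transition sharing an input with $y$ becomes enabled, free-choice forces its preset to coincide with $\preset{y}$, so the conflict cluster of $y$ behaves uniformly. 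The hard step will be formalising this rewinding: one has to exhibit a firing schedule from $M_1$ that reaches some $M^*$ with $\preset{y} \subseteq M^*$ and in which $x$ (or all of $\preset{x}$ if $x \in T$) remains marked, and the only available tools are the structural freedom granted by the free-choice axiom and the existence guarantees given by liveness together with boundedness.

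Once $M^*$ has been produced, the remaining side-conditions of Def.~\ref{def:Parallel} for $(x,y)$ follow from $M^*$ directly: condition~(3), applicable when $x \in P$, asks $(z,x) \in\;\parallel$ for every $z \in \preset{y}$, but in $M^*$ all of $\preset{y}$ and $x$ are simultaneously marked places, giving $M^*$ as the common witness; condition~(4), applicable when $x \in T$, asks $(z,z') \in\;\parallel$ for $z \in \preset{x}$ and $z' \in \preset{y}$, and again $M^*$ witnesses this since $x$ is enabled (so $\preset{x}$ is marked) and $\preset{y}$ is marked. Altogether, the forward direction is essentially routine, and almost the entire effort concentrates on the rewinding argument in the reverse direction, which is where the free-choice, liveness, and boundedness hypotheses of Observation~\ref{obs:Case1} are indispensable.
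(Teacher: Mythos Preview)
The paper does not actually prove Observation~\ref{obs:Case1}. The only argument it offers is the informal paragraph immediately preceding the statement, which handles the forward direction in one sentence: from $(x,y)\in\;\parallel$ with $y\in T$, take a witnessing marking $M$, fire $y$ to obtain $M'$, and observe that every $s\in\postset{y}$ is now marked while $x$ remains active. The biconditional itself is then attributed to Kovalyov and Esparza~\cite{KovalyovEsparza}, and the reverse direction is never argued in the paper.

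Your forward direction is exactly the paper's argument, but carried out with considerably more care: you make explicit why $x$ stays active after firing $y$ (using conditions~(3)/(4) of Def.~\ref{def:Parallel} together with irreflexivity to rule out $x\in\preset{y}$, resp.\ $\preset{x}\cap\preset{y}\neq\emptyset$), and you check the recursive side-conditions for the new pairs $(x,s)$. This is strictly more than the paper does.

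Your reverse direction goes beyond anything the paper attempts. The ``rewinding'' strategy you outline is reasonable and correctly identifies where liveness, boundedness, and free-choice must enter, but as you yourself flag, the key step---exhibiting a firing schedule from $M_1$ to some $M^\ast$ that enables $y$ while preserving the activity of $x$---is not actually carried out. So for that half your proposal is a plan rather than a proof. Since the paper only ever \emph{uses} the forward implication (e.g.\ in Cor.~\ref{cor:Case1NoPath}), your detailed treatment of that direction already covers what the present paper needs; the reverse direction is genuinely delegated to~\cite{KovalyovEsparza}.
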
 %
\noindent The left side of Figure~\ref{fig:ExamplePNconcalgo} visualizes concurrency relations with dotted lines. 

Considering case \textbf{(2)} ($s \in T$ and $y \in P$), there is a reachable marking $M$ from $M_0$, $\reachable{M_0}{M}$, in which $x$ and $y$ are active (i.\,e. $M(y) \geq 1$). Following \cite{KovalyovEsparza}, $(x,s) \in\;\parallel$ if and only if $x$ is concurrent to $\preset{s}$. Otherwise, $s$ and $x$ cannot be active in a reachable marking from $M$. Case \textbf{(2)} (a) of Figure~\ref{fig:ExamplePNconcalgo} illustrates such a situation, in which $x$ is concurrent to $\preset{s}$ and also to $s$, $(x,s) \in\;\parallel$. Case \textbf{(2)} (b) of Figure~\ref{fig:ExamplePNconcalgo} visualizes a situation, in which $x$ is not concurrent to the upper right place but for the other places, such that $(x,s) \notin\;\parallel$. In general, it results in the following observation: 

\begin{observation}
\label{obs:Case2}
Let $N=(P,T,F)$ be a live and bounded free-choice net. %
\begin{gather}
	\forall x \in (P \cup T) \; \forall y \in P \;  \forall s \in \postset{y}\colon \nonumber \\
	\big( \forall p \in \preset{s}\colon \; \; (x,y) \in\;\parallel \; \land \; (x,p) \in\;\parallel \big) \quad \longrightarrow \quad (x,s) \in\;\parallel \tag*{\defend}
\end{gather}
\end{observation}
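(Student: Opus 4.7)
The plan is to derive Observation~\ref{obs:Case2} directly from Definition~\ref{def:Parallel}. Fix $x \in P \cup T$, $y \in P$, and $s \in \postset{y}$, and assume $(x,p) \in\;\parallel$ for every $p \in \preset{s}$ (the separate conjunct $(x,y) \in\;\parallel$ in the hypothesis is just the instance $p = y$, since $y \in \preset{s}$). By Definition~\ref{def:Parallel}, each such pair yields a reachable marking $M_p$ from the initial marking in which $x$ and $p$ are simultaneously active. The crux of the argument is to merge these separate witnesses into a single reachable marking $M^*$ in which $x$ is active and every $p \in \preset{s}$ carries a token; in such an $M^*$, since $N$ is free-choice and all input places of $s$ are marked, $s$ is enabled and therefore active, which establishes clause~(1) of Definition~\ref{def:Parallel} for the pair $(x,s)$.

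The merging step is where I expect the real work to sit, and I would invoke the standard ``concurrent reachability'' property of live and bounded free-choice nets: pairwise concurrency of $x$ with each $p \in \preset{s}$ can be lifted to a single reachable marking that marks all of $\preset{s}$ while $x$ remains active. The intuition is to start from a witness marking for one input place and use liveness to drive tokens onto the remaining input places of $s$; the free-choice property guarantees that the transitions needed to route tokens towards further $p \in \preset{s}$ do not confuse the branching structure already witnessed as concurrent, and boundedness keeps the induction finite. This is precisely the construction underlying the \emph{KovEs} algorithm, so I would either cite the corresponding lemma from Kovalyov and Esparza or reproduce a short inductive argument over $|\preset{s}|$ that, at each step, extends a marking witnessing concurrency of $x$ with a subset $Q \subsetneq \preset{s}$ to one witnessing concurrency of $x$ with $Q \cup \{p'\}$ for some missing $p' \in \preset{s}$.

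Once $M^*$ is obtained, the remaining clauses of Definition~\ref{def:Parallel} for $(x,s)$ follow from the hypothesis. Because $s \in T$, there are two sub-cases: if $x \in P$, clause~(3) demands $(z,x) \in\;\parallel$ for every $z \in \preset{s}$, which is exactly the assumption (using symmetry of $\parallel$); if $x \in T$, clause~(4) demands $(z,z') \in\;\parallel$ for all $z \in \preset{x}$ and all $z' \in \preset{s}$, and this is obtained by applying clause~(4) of Definition~\ref{def:Parallel} to each assumed $(x,z') \in\;\parallel$, which already forces $(z,z') \in\;\parallel$ for every $z \in \preset{x}$. Combining the activity of $x$ and $s$ in $M^*$ with these recursive clauses yields $(x,s) \in\;\parallel$.

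The main obstacle is the merging step: the witness markings $M_p$ are a~priori unrelated, and naively concatenating their defining occurrence sequences need not preserve tokens in parallel branches. The argument must therefore exploit all three assumptions---liveness (to re-enable the transitions that replenish missing input places of $s$), boundedness (to ensure that the induction over $\preset{s}$ terminates without the token count blowing up), and free-choice (to guarantee that routing decisions at shared input places do not silently kill a parallel witness). Everything else in the proof is a bookkeeping check against Definition~\ref{def:Parallel}.
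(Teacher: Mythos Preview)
Your proposal is more than the paper itself offers: the paper does not prove Observation~\ref{obs:Case2} at all but simply attributes it to Kovalyov and Esparza (the sentence ``Following \cite{KovalyovEsparza}, $(x,s) \in\;\parallel$ if and only if $x$ is concurrent to $\preset{s}$'' is the entirety of the justification). Your option of citing the corresponding lemma from \cite{KovalyovEsparza} for the merging step is therefore exactly in line with the paper's treatment; the surrounding bookkeeping you describe against Definition~\ref{def:Parallel} is additional detail the paper omits.

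Two small remarks on that bookkeeping. First, when $x \in T$ and you unpack the assumed $(x,z') \in\;\parallel$ with $z' \in \preset{s} \subseteq P$, it is clause~(2) of Definition~\ref{def:Parallel} that yields $(z,z') \in\;\parallel$ for all $z \in \preset{x}$, not clause~(4). Second, your gloss on the role of boundedness is slightly off: the induction over $\preset{s}$ terminates because $\preset{s}$ is finite; boundedness enters instead through the structure theory of live and bounded free-choice nets (S-coverability, the Home Marking Theorem, etc.) that underlies Kovalyov and Esparza's correctness argument for the merging step. Neither point affects the soundness of your plan.
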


Algorithm~\ref{algo:KovEsAlgorithm} depicts the resulting algorithm of Kovalyov and Esparza \cite{KovalyovEsparza}. It differs in its notation from the original in four ways: At first, it explicitly ignores reflexive pairs of nodes $(x,x) \in (P \cup T) \times (P \cup T)$; second, it uses the union-operator $\bigcup$ for generating pairs to allow us a simpler discussion about its time complexity; third, line 9 is simplified to just $E \gets R$ instead of $E \gets R \cap \big((P \cup T) \times P\big)$ from the original paper since all pairs in $R$ must fulfill the condition by lines 3 and 5; and, last, it uses the relation $A$ as a mapping in lines 7 and 14 instead of a set.

Readers, who are not familiar with the algorithm, will find an explanation of the algorithm in the following paragraph. Familiar readers may skip it. In Algorithm~\ref{algo:KovEsAlgorithm}, lines 3 and 5 determine the initial concurrency relations discussed as ``initial step'' above. Line 7 defines a mapping $A$ from each place to its directly succeeding places (i.\,e., the union of postsets of the postset of a place). This is used in line 14 to quickly determine new pairs in relation. $E$ in line 9 is a set that contains new pairs not investigated yet. The while loop (lines 10--16) is iterated until $E$ is empty. It takes an arbitrary pair $(x,p)$ out of $E$ (line 11) and one random transition $t$ of the postset of $p$ (line 12). Following case (2) discussed above, $x$ is in concurrency relation with each place in the postset of $t$ (lines 14--16) if it is already in relation with each place $p$ $\preset{t}$ (line 13). $E$ is extended with all new pairs (line 15) and $R$ is extended as well (line 16). Although Kovalyov and Esparza state that their algorithm identifies all pairs of nodes in concurrency relation, it ``only'' identifies pairs of \emph{places} in relation. For this reason, case \textbf{(1)} discussed above is just indirectly important in Algorithm~\ref{algo:KovEsAlgorithm} but will be used for revisions in the next section.

\begin{algorithm}
\caption{The Kovalyov and Esparza algorithm \cite{KovalyovEsparza} to determine $\parallel$ for a given live and bounded free-choice net $N=(P,T,F)$ with its initial marking $M_0$.}
\label{algo:KovEsAlgorithm}
\begin{algorithmic}[1]
\Function{determineKovEsConcurrency}{$N=(P,T,F)$, $M_0$}
	\State \emph{// Add pairs of places carrying a token in $M_0$.}
  \State $R \gets \bigcup_{p_1 \in \sources{N}, M_0(p_1) \geq 1} \, \bigcup_{p_2 \not = p_1 \in \sources{N}, M_0(p_2) \geq 1} \{ (p_1,p_2) \}$
	\State \emph{// Add pairs of places of the postset of each transition.}
	\State $R \gets R \cup \bigcup_{t \in T} \, \bigcup_{p_1 \in \postset{t}} \, \bigcup_{p_2 \not = p_1 \in \postset{t}} \{ (p_1,p_2) \}$
	\State \emph{// Determine for each place $p$ its post-postset.}
	\State $A \gets \bigcup_{p \in P} \, \bigcup_{p' \in \postset{(\postset{p})}} \{ (p,p') \}$
	\State \emph{// Initialize the set of new relations.}
	\State $E \gets R$
	\While{$E \not = \emptyset$}
		\State Take $(x,p) \in E$ and remove it $E \gets E \setminus \{ (x,p) \}$.
		\State Take $t \in \postset{p}$.
		\If{$\bigcup_{y \in \preset{t}} \{ (x,y) \} \subseteq R$}
			\State $tmp \gets \bigcup_{p' \in A(p)} \{ (x,p') \}$
			\State $E \gets E \cup (tmp \setminus R)$
			\State $R \gets R \cup tmp$
		\EndIf
	\EndWhile
	\State \Return $R$
\EndFunction
\end{algorithmic}
\end{algorithm}

\section{Quadratic Algorithm for AFW-Nets}
\label{sec:AccyclicNets}

The \emph{KovEs} algorithm has a worst-case time complexity of $O\big(|P|(|P| + |T|)^2\big)$. This worst case appears if nets have a high level of concurrency, leading to long computation times as the evaluation will show later. This section revises the \emph{KovEs} algorithm for AFW-nets based on soundness and Observations~\ref{obs:Case1} and \ref{obs:Case2} discussed in Section~\ref{sec:Concurrency}. An extension for cyclic FW-nets is presented in the next section. 

Concurrency of two different transitions in sound AFW-nets requires the absence of any path between them:

\begin{theorem}
\label{theorem:PathNotParallel}
Let $WN=(P,T,F,i,o)$ be a sound AFW-net. %
\begin{gather*}
	\forall x,y \in P \cup T, \, x \not = y\colon \nonumber\\
	\big( \Paths{x}{y}{}{a} \cup \Paths{y}{x}{}{a} \big) \not = \emptyset \quad \longrightarrow \quad (x,y) \notin\;\parallel \tag*{\defend}
\end{gather*}
\end{theorem}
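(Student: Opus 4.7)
The plan is to argue by contradiction using the Path-to-End Theorem (Theorem~\ref{theorem:PathToEndTheorem}). Suppose $x \neq y$, there is a path between them, and $(x,y) \in \parallel$. By the symmetry of $\parallel$, I may assume without loss of generality that there is an acyclic path $W \in \Paths{x}{y}{}{a}$. By condition (1) of Definition~\ref{def:Parallel}, there exists a reachable marking $M$ such that both $\act{x}{M}$ and $\act{y}{M}$. The goal is to exhibit two distinct places $p_x, p_y$ with $M(p_x) \geq 1$ and $M(p_y) \geq 1$ that lie on a common acyclic path from some place to the sink $o$, which directly contradicts Theorem~\ref{theorem:PathToEndTheorem}.

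The construction of $p_x$ and $p_y$ splits into four cases based on whether $x$ and $y$ are places or transitions. If $x \in T$, pick any $p_x \in \preset{x}$; since $x$ is enabled, $M(p_x) \geq 1$, and attach $p_x$ at the front of $W$. If $x \in P$, set $p_x := x$. Symmetrically, if $y \in T$, the path $W$ must end with some $p_y \in \preset{y}$ as its second-to-last node (by the alternation of places and transitions), and $M(p_y) \geq 1$ since $y$ is enabled; truncate $W$ by removing $y$. If $y \in P$, set $p_y := y$. In every case this yields an acyclic path $W'$ from $p_x$ to $p_y$ with $p_x \neq p_y$ (ensured by $x \neq y$ together with the existence of a path between them) and both endpoints carrying tokens in $M$.

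Next, I extend $W'$ to an acyclic path reaching $o$. By the workflow-net property, there is an acyclic path $W'' \in \Paths{p_y}{o}{}{a}$. Crucially, since the net is acyclic, $W' \cap W'' = \{p_y\}$: any shared node $z \neq p_y$ would satisfy both $p_x \rightarrow^* z \rightarrow^* p_y$ (via $W'$) and $p_y \rightarrow^* z$ (via $W''$), producing a cycle through $z$ and $p_y$, which contradicts acyclicity of the AFW-net. Therefore $W' + W''$ is an acyclic path in $\Paths{p_x}{o}{}{a}$ carrying at least two tokens in $M$, contradicting Theorem~\ref{theorem:PathToEndTheorem} applied to the sound FW-net $WN$.

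The main obstacle I anticipate is the bookkeeping for the mixed cases $x \in T, y \in P$ and $x, y \in T$: I must verify that prepending $p_x$ to $W$ keeps the path acyclic (which follows because $p_x \in \preset{x}$ cannot lie on $W$, else $x \rightarrow^* p_x \rightarrow x$ would be a cycle) and similarly that $p_x \neq p_y$ in all four combinations. Once these minor acyclicity and distinctness checks are in place, the Path-to-End Theorem closes the argument uniformly.
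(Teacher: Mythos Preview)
Your proof is correct and follows essentially the same route as the paper: argue by contradiction, replace $x$ and $y$ by token-carrying places $p_x,p_y$ (the paper's $x',y'$), concatenate with an acyclic path to $o$, and invoke the Path-to-End Theorem. Your bookkeeping is in fact tighter than the paper's in places (you take $p_y$ to be the actual predecessor of $y$ on $W$ and justify acyclicity of the concatenation explicitly); the one detail your stated justification does not yet cover is $p_x \neq p_y$ in the edge case $x\in P$, $y\in T$, $W=(x,y)$, which is ruled out by condition~(3) of Definition~\ref{def:Parallel} together with irreflexivity of~$\parallel$.
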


\begin{proof}
\proofsize The precondition by Theorem~\ref{theorem:PathNotParallel} is a sound AFW-net $WN=(P,T,F,i,o)$ with its initial marking $M_i$. %
The proof is done by contradiction. %
We assume there is a case, where $x$ has a path to $y$ or $y$ has a path to $x$ and both are still in a concurrency relation: %
\begin{equation}
	\forall x,y \in P \cup T, \, x \not = y\colon \; %
	\big( \Paths{x}{y}{}{a} \cup \Paths{y}{x}{}{a} \big) \not = \emptyset \quad \land \quad (x,y) \in\;\parallel \label{eq:2CD}
\end{equation} %
Without loss of generality and following from this contradiction \eqref{eq:2CD}, let $x$ and $y$ be two nodes with $(x,y) \in\;\parallel$ and there is a path $A_{x \rightarrow y}$ from $x$ to $y$: %
\begin{equation}
	x, y \in P \cup T, \; x \not = y, \; (x,y) \in\;\parallel \quad \land \quad \Paths{x}{y}{}{a} \not = \emptyset \quad \land \quad A_{x \rightarrow y} \in \Paths{x}{y}{}{a} \label{eq:2S1}
\end{equation} %
Following from Def.~\ref{def:Parallel} of concurrency, there is a marking $M$ where $x$ and $y$ are active: %
\begin{equation}
	\exists M, \reachable{M_i}{M}\colon \; \act{x}{M} \; \land \; \act{y}{M} \label{eq:2S2}
\end{equation} %
To simplify ongoing considerations and following from \eqref{eq:2S1} and \eqref{eq:2S2}, we pick two \emph{places} $x'$ and $y'$ carrying a token in $M$. If $x$ ($y$) is a place, then $x' = x$ ($y' = y$); otherwise, $x'$ ($y'$) is a place in $\preset{x}$ ($\preset{y}$): %
\begin{equation}
	x' = \begin{cases} x, \; & x \in P \\ p \in (\preset{x} \cap M), \; & x \in T \end{cases} \quad , \quad %
	y' = \begin{cases} y, \; & y \in P \\ p \in (\preset{y} \cap M), \; & y \in T \end{cases}
\end{equation} %
Since $x'$ and $y'$ are places, $M(x') = M(y') = 1$. Furthermore, let $A_{x' \rightarrow y'}$ be the modified path of $A_{x \rightarrow y}$ with adding $x'$ or removing $y$ if necessary. Following from Def.~\ref{def:WorkflowNet}, there is a path $B_{y' \rightarrow o}$ from $y'$ to the sink $o$ of $WN$: %
\begin{equation}
	\Paths{y'}{o}{}{a} \not = \emptyset \quad \land \quad B_{y' \rightarrow o} \in \Paths{y'}{o}{}{a} \label{eq:2S3}
\end{equation} %
Since $WN$ is acyclic by precondition: %
\begin{equation}
	A_{x' \rightarrow y'} + B_{y' \rightarrow o} = C \quad \land \quad C \text{ is acyclic} \label{eq:2S4}
\end{equation} %
As a consequence from \eqref{eq:2S2}, \eqref{eq:2S3}, and \eqref{eq:2S4}, the sum of all tokens on the acyclic path $C$ in $M$ is more than 2: %
\begin{equation}
	|C \cap M| \geq |\{x',y'\}| \geq 2 \label{eq:2S5}
\end{equation} %
Since $WN$ is sound, \eqref{eq:2S5} contradicts the Path-to-End Theorem~\ref{theorem:PathToEndTheorem} $\lightning$. As a consequence, $WN$ cannot be sound and the contradiction fails. The theorem is valid. $\checkmark$ \proofend
\end{proof}

Following from Theorem~\ref{theorem:PathNotParallel} above, if two nodes are in concurrency relation, there is no path between them.

\begin{corollary}
\label{cor:ParallelPath}
Let $WN=(P,T,F,i,o)$ be a sound AFW-net.
\begin{gather}
	\forall x,y \in P \cup T, \, x \not = y\colon \nonumber \\
	(x,y) \in\;\parallel \quad \longrightarrow \quad \Paths{x}{y}{}{a} = \Paths{y}{x}{}{a} = \emptyset \tag*{\defend}
\end{gather}
\end{corollary}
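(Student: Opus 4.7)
The plan is to observe that Corollary~\ref{cor:ParallelPath} is the direct contrapositive of Theorem~\ref{theorem:PathNotParallel}, which has just been established. I would begin by restating the theorem in the form: for distinct $x,y$, we have $\bigl(\Paths{x}{y}{}{a} \cup \Paths{y}{x}{}{a}\bigr) \not = \emptyset \;\Longrightarrow\; (x,y) \notin\;\parallel$. Taking the contrapositive yields $(x,y) \in\;\parallel \;\Longrightarrow\; \bigl(\Paths{x}{y}{}{a} \cup \Paths{y}{x}{}{a}\bigr) = \emptyset$.

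The only remaining step is a set-theoretic triviality: since a union of two sets is empty iff both summands are empty, $\bigl(\Paths{x}{y}{}{a} \cup \Paths{y}{x}{}{a}\bigr) = \emptyset$ is equivalent to $\Paths{x}{y}{}{a} = \Paths{y}{x}{}{a} = \emptyset$, which is exactly the conclusion of the corollary. The quantifier prefix $\forall x,y \in P \cup T, x \not = y$ is identical in both statements, so no re-indexing is needed.

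There is no genuine obstacle here; the corollary is an immediate logical rewriting of Theorem~\ref{theorem:PathNotParallel}. The only thing worth double-checking is that we do not accidentally rely on $x$ and $y$ both being places or both being transitions — but the theorem is stated for arbitrary nodes $x,y \in P \cup T$, so the corollary inherits the same generality without extra case analysis.
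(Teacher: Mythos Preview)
Your proposal is correct and matches the paper's own proof, which consists of the single sentence ``The statement directly follows from the contraposition of Theorem~\ref{theorem:PathNotParallel}.'' Your added remark that a union is empty iff both components are empty is a harmless elaboration of that same observation.
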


\begin{proof}
\proofsize The statement directly follows from the contraposition of Theorem~\ref{theorem:PathNotParallel}. \proofend
\end{proof}

For example, the AFW-net in Figure~\ref{fig:ExamplePNconcpath} has concurrent transitions $B$ and $C$, $(B,C) \in\;\parallel$, and concurrent places $b$ and $e$, $(b,e) \in\;\parallel$. After firing transition $C$, $M(f) \geq 1$ and $M(h) \geq 1$, such that $(b,h) \in\;\parallel$. As Cor.~\ref{cor:ParallelPath} states, there is neither a path from place $b$ to $e$ nor from place $b$ to $h$. For place $g$, there is a path from place $b$ to $g$; following Theorem~\ref{theorem:PathNotParallel}, $(b,g) \notin\;\parallel$. Regarding Cor.~\ref{cor:ParallelPath}, each combination of two (different) concurrent \emph{active} nodes cannot have any path in-between for sound AFW-nets.

Although the absence of paths between two nodes is a necessary condition for concurrency in AFW-nets, it is not sufficient. For example, in Figure~\ref{fig:ExamplePNexclpath}, place $c$ has no path to place $d$, but both are never concurrent. 

\begin{figure}[tb]
	\centering
		\includegraphics[width=0.50\textwidth]{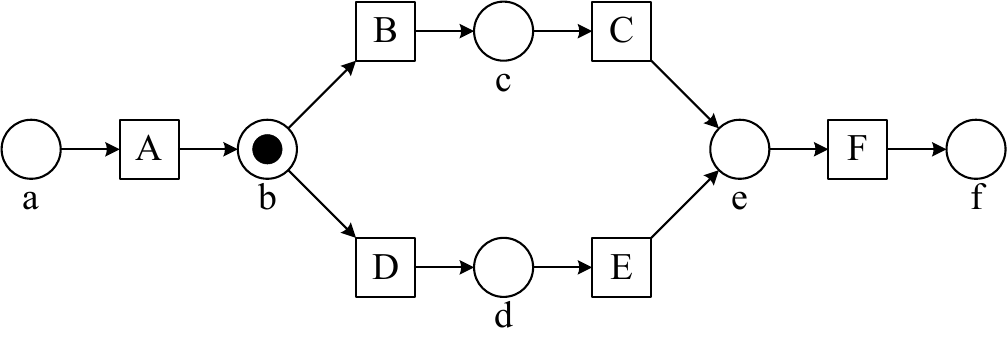}
	\minfigspace
	\caption{A simple net with exclusive paths, in which, for example place $c$ and place $d$ can never be concurrent.}
	\label{fig:ExamplePNexclpath}
\end{figure}

\subsection{First Revision}

The existence or absence of paths between two nodes of an AFW-net is used in our approach to revise the \emph{KovEs }algorithm. In doing this, we reconsider Observations~\ref{obs:Case1} and \ref{obs:Case2} from Section~\ref{sec:Concurrency}. Recall that, given a pair of nodes $(x,y) \in\;\parallel$ and a node $s \in \postset{y}$, we need to determine whether $(x,s) \in\;\parallel$.
Observation~\ref{obs:Case1} with $y \in T$ states that $\forall s \in \postset{y}, (x,s) \in\;\parallel$.
Following  Cor.~\ref{cor:ParallelPath}, there cannot be a path from $x$ to $s$: %

\begin{corollary}
\label{cor:Case1NoPath}
Let $WN=(P,T,F,i,o)$ be a sound AFW-net. %
\begin{gather}
	\forall y \in T \; \forall (x,y) \in\;\parallel \; \forall s \in \postset{y}\colon \nonumber \\
	\Paths{x}{s}{}{a} = \emptyset \quad \longleftrightarrow \quad (x,s) \in\;\parallel \tag*{\defend}
\end{gather}
\end{corollary}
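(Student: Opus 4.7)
The plan is to decompose the biconditional into its two implications and discharge each by invoking one of the two immediately preceding results. The overall structure will be short: no marking construction, no contradiction argument, no path-concatenation trick is needed beyond what is already established.

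For the direction $(x,s) \in\;\parallel \;\Rightarrow\; \Paths{x}{s}{}{a} = \emptyset$, I would appeal directly to Corollary~\ref{cor:ParallelPath}. Under its hypothesis that $WN$ is a sound AFW-net, any pair of distinct nodes in the concurrency relation must have empty acyclic-path sets in both directions, so in particular $\Paths{x}{s}{}{a} = \emptyset$. (A brief aside justifying $x \neq s$ under the quantifier guards is enough: if $x = s$, then $x \in \postset{y}$ would make $y$ an input of $x$, contradicting the $(x,y) \in\;\parallel$ hypothesis via clauses (2)/(3) of Def.~\ref{def:Parallel}.)

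For the converse direction $\Paths{x}{s}{}{a} = \emptyset \;\Rightarrow\; (x,s) \in\;\parallel$, I would point out that under the quantifier guards $y \in T$, $(x,y) \in\;\parallel$, and $s \in \postset{y}$, Observation~\ref{obs:Case1} already yields $(x,s) \in\;\parallel$ outright. In other words, the no-path antecedent is not even used in this direction; the conclusion follows unconditionally from the quantifier context.

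The only subtlety worth flagging is that both sides of the biconditional are in fact guaranteed to hold under the quantifier guards, so the statement is somewhat tautological as a logical equivalence. Its value is algorithmic: it packages the combined content of Observation~\ref{obs:Case1} and Corollary~\ref{cor:ParallelPath} into a cheap structural criterion (``no acyclic path from $x$ to $s$'') that the forthcoming CP algorithm can use as a surrogate for deciding $(x,s) \in\;\parallel$. I do not anticipate a real obstacle in the proof itself beyond recognizing this structure and citing the two results in the correct order.
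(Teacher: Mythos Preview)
Your proof is correct, and your insight that both sides of the biconditional are forced true under the quantifier guards is sharper than what the paper writes out. The backward implication is handled identically (cite Corollary~\ref{cor:ParallelPath}). For the forward implication you invoke Observation~\ref{obs:Case1} directly, whereas the paper re-derives its content by hand: it takes a reachable marking $M$ with $x$ and $y$ active, fires $y$ to reach $M'$ with $s$ active and $x$ still active, and concludes $(x,s)\in\;\parallel$. The paper's detour is arguably prudent, since Observation~\ref{obs:Case1} is stated for \emph{live and bounded} free-choice nets and a sound AFW-net is not itself live (only its short-circuit extension is); your citation glosses over this hypothesis mismatch, though the forward direction of the observation holds in both settings for the same reason. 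One small slip in your aside: clauses (2)/(3) of Definition~\ref{def:Parallel} constrain the \emph{preset} of a transition, not its postset, so they do not directly rule out $y\in\preset{x}$; the clean way to get $x\neq s$ is that $x=s$ would give $\Paths{y}{x}{}{a}\neq\emptyset$, contradicting Corollary~\ref{cor:ParallelPath} applied to $(x,y)\in\;\parallel$.
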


\begin{proof}
\proofsize Constructive proof for $y \in T$, $(x,y) \in\;\parallel$, $s \in \postset{y}$, and the initial marking $M_i$. %
We prove both directions: %
\begin{description}
	\item[$\quad \Paths{x}{s}{}{a} = \emptyset \; \longrightarrow (x,s) \in\;\parallel$] By Cor.~\ref{cor:ParallelPath} and since $(x,y) \in\;\parallel$, there cannot be a path between $x$ and $y$, $\Paths{x}{y}{}{a} = \emptyset$. Therefore, $x \notin \preset{y}$.
	
	Following Def.~\ref{def:Parallel} of concurrency and since $(x,y) \in\;\parallel$, there is a marking $M$ with $\act{x}{M}$ and $\act{y}{M}$: %
	\begin{equation}
		\exists M\colon \; \reachable{M_i}{M} \quad \land \quad \act{x}{M} \quad \land \quad \act{y}{M} \label{eq:C1NP1}
	\end{equation} %
	Since $y$ is active in $M$ and $y \in T$, $y$ is enabled in $M$. As a consequence, firing $y$ in $M$ can lead to a marking $M'$ by Def.~\ref{def:Semantics}, in which $s$ is active: %
	\begin{equation}
		\exists M'\colon \; \step{M}{y}{M'} \quad \land \quad \act{x}{M'} \quad \land \quad \act{s}{M'}
	\end{equation}
	By Def.~\ref{def:Parallel} of concurrency and since $\Paths{x}{s}{}{a} = \emptyset$ and $\Paths{y}{x}{}{a} = \emptyset$ by Cor.~\ref{cor:ParallelPath}: $(x,s) \in\;\parallel$. $\checkmark$
	\item[$\quad (x,s) \in\;\parallel \; \longrightarrow \Paths{x}{s}{}{a} = \emptyset$] By Cor.~\ref{cor:ParallelPath} and since $(x,s) \in\;\parallel$: $\Paths{x}{s}{}{a} = \emptyset$. $\checkmark$ \proofend
\end{description}
\end{proof}

Observation~\ref{obs:Case2} with $y \in P$ states that $x$ is only concurrent to a transition $s \in \postset{y}$ if it is also concurrent to all places $p$ in $\preset{s}$. This statement is complex to check but can be simplified by considering paths: %

\begin{corollary}
\label{cor:Case2NoPath}
Let $WN=(P,T,F,i,o)$ be a sound AFW-net. %
\begin{gather}
	\forall y \in P \; \forall (x,y) \in\;\parallel \; \forall s \in \postset{y}\colon \nonumber \\
	\Paths{x}{s}{}{a} = \emptyset \quad \longleftrightarrow \quad (x,s) \in\;\parallel \tag*{\defend}
\end{gather}
\end{corollary}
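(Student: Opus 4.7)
The backward direction $(\longleftarrow)$ is immediate from Corollary~\ref{cor:ParallelPath}: if $(x,s)\in\;\parallel$, then $\Paths{x}{s}{}{a}=\emptyset$.

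For the forward direction, my plan is to mirror the constructive strategy used for Corollary~\ref{cor:Case1NoPath}, while accounting for the new situation that $s$ is now a \emph{transition} (since $y\in P$ forces $s\in T$). From $(x,y)\in\;\parallel$ I first extract a reachable marking $M$ with $\act{x}{M}$ and $\act{y}{M}$, so in particular $M(y)\geq 1$. The goal is to exhibit a reachable $M^\ast$ in which $x$ and $s$ are both active and the preset-side conditions of Def.~\ref{def:Parallel} hold.

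I would split on $|\postset{y}|$. If $|\postset{y}|>1$, the free-choice property forces $\preset{s}=\{y\}$, so $s$ is already enabled in $M$ and we may take $M^\ast = M$. The auxiliary conditions of Def.~\ref{def:Parallel}(3) (if $x\in P$) or (4) (if $x\in T$) reduce in this case to $(y,x)\in\;\parallel$ or to $\forall z\in\preset{x}\colon\,(z,y)\in\;\parallel$, both of which follow from $(x,y)\in\;\parallel$ (the latter by clause~(2) of Def.~\ref{def:Parallel}, applied with the roles of $x$ and $y$ swapped).

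The delicate case is $|\postset{y}|=1$, i.e.\ $\postset{y}=\{s\}$. Then $s$ is the unique transition able to drain the token on $y$, so by soundness any occurrence sequence from $M$ leading to $M_o$ must fire $s$ at some step; let $M^\ast$ be the marking just before its \emph{first} firing, so that $s$ is enabled in $M^\ast$. The main obstacle is to guarantee that $x$ is still active in $M^\ast$, and here the hypothesis $\Paths{x}{s}{}{a}=\emptyset$ is exactly the leverage needed: no firing that is required to enable $s$ lies downstream of $x$. I plan to formalize this either (i) by a commutation/permutation argument on the chosen occurrence sequence, deferring any firing that would deactivate $x$ until after $s$ has been fired, using that such firings are independent of those that must precede $s$; or (ii) by a contradiction using the Path-to-End Theorem, noting that if the token supporting $x$'s activeness were consumed strictly before the first firing of $s$, then at some intermediate marking there would be an acyclic path ending at $o$ carrying at least two tokens (one travelling through $\postset{x}$ toward $o$, one still located on the path $y \to s \to \cdots \to o$ forced by acyclicity and $\postset{y}=\{s\}$), contradicting Theorem~\ref{theorem:PathToEndTheorem}. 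Once $M^\ast$ is secured, the remaining preset conditions of Def.~\ref{def:Parallel} are of the form $(z,x)\in\;\parallel$ for $z\in\preset{s}$ (and $(z,z')\in\;\parallel$ for $z'\in\preset{x}$ if $x\in T$); these are obtained by invoking the backward implication of the corollary itself on pairs $(x,p)$ with $p\in\preset{s}\setminus\{y\}$ (permissible because $\Paths{x}{p}{}{a}=\emptyset$ follows from $\Paths{x}{s}{}{a}=\emptyset$ by appending the arc $p\to s$), together with the symmetry of $\parallel$ and Def.~\ref{def:Parallel}(2). The permutation/Path-to-End step of case~$|\postset{y}|=1$ is where I expect the real work to lie.
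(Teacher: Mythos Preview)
Your high-level strategy coincides with the paper's: from $(x,y)\in\;\parallel$ obtain a reachable $M$ with both active, then use soundness together with the hypothesis $\Paths{x}{s}{}{a}=\emptyset$ (hence $\Paths{x}{p}{}{a}=\emptyset$ for every $p\in\preset{s}$) to reach some $M^\ast$ in which $x$ is still active and $s$ is enabled, and conclude via Def.~\ref{def:Parallel}. The paper does not perform your case split on $|\postset{y}|$; it simply asserts in one sentence that ``since $WN$ is sound, there must be a reachable marking $M'$ in which $x$ is still active and $s$ became enabled, because firing $s$ is independent from $x$.'' Your case analysis and option~(i) (the permutation/commutation argument) are exactly what is needed to make that assertion rigorous, so in that sense you are filling a gap the paper leaves open rather than taking a different route.

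Two points to fix. First, your sketch~(ii) via Theorem~\ref{theorem:PathToEndTheorem} does not go through as written: after the token supporting $x$ is consumed it may move along a path to $o$ that is entirely disjoint from any path through $y$ and $s$, so there is no single acyclic path to $o$ carrying two tokens; stick with option~(i). Second, your handling of the recursive clauses of Def.~\ref{def:Parallel} is misdescribed: the ``backward implication of the corollary'' gives $\Paths{x}{p}{}{a}=\emptyset$ \emph{from} $(x,p)\in\;\parallel$, not the direction you need. The correct (and simpler) argument is direct: in $M^\ast$ every $p\in\preset{s}$ carries a token, so if $x\in P$ then $(p,x)\in\;\parallel$ by clause~(1) alone (both are places; $p\neq x$ since otherwise $(x,s)\in\Paths{x}{s}{}{a}$). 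If $x\in T$, every $z\in\preset{x}$ also carries a token, and $z\neq p$ for all $p\in\preset{s}$ by free-choice: were $z=p$ then $|\postset{z}|\geq 2$ forces $\preset{s}=\{z\}$, hence $y=z$, contradicting $(z,y)\in\;\parallel$ (which you already have from clause~(2) applied to $(x,y)\in\;\parallel$). This discharges clauses~(3) and~(4) without any appeal to the corollary itself.
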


\begin{proof}
\proofsize Constructive proof for $(x,y) \in\;\parallel$, $y \in P$, $s \in \postset{y}$, and the initial marking $M_i$. %
Let %
\begin{equation}
	M, \reachable{M_i}{M} \quad \land \quad \act{x}{M} \quad \land \quad \act{y}{M} \label{eq:C6S0}
\end{equation} %
We prove both directions: %
\begin{description}
	\item[$\quad \Paths{x}{s}{}{a} = \emptyset \; \longrightarrow (x,s) \in\;\parallel$] If $x$ would have a path to any node in $\preset{s}$, it would have a path to $s$. However, $\Paths{x}{s}{}{a} = \emptyset$. For this reason, $x$ cannot have a path to any node in $\preset{s}$: %
	\begin{equation}
		\forall p \in \preset{s}\colon \; \Paths{x}{p}{}{a} = \emptyset \label{eq:C6S1}
	\end{equation} %
	As a consequence, since $WN$ is sound, there must be a reachable marking $M'$, in which $x$ is still active and $s$ became enabled (active), i.\,e., firing $s$ is ``independent'' from $x$, because there is no path from $x$ to $s$ by \eqref{eq:C6S1}.: %
	\begin{equation}
		\exists M'\colon \; \reachable{M}{M'} \quad \land \quad \act{x}{M'} \quad \land \quad \act{s}{M'}
	\end{equation} %
	Following Def.~\ref{def:Parallel} and since $\Paths{x}{s}{}{a} = \emptyset$ and $\Paths{y}{x}{}{a} = \emptyset$ by Cor.~\ref{cor:ParallelPath}, $(x,s) \in\;\parallel$. $\checkmark$ %
	\item[$\quad (x,s) \in\;\parallel \; \longrightarrow  \Paths{x}{s}{}{a} = \emptyset$] By Cor.~\ref{cor:ParallelPath} and since $(x,s) \in\;\parallel$: $\Paths{x}{s}{}{a} = \emptyset$. $\checkmark$ \proofend
\end{description}
\end{proof}

In summary, for both cases $y \in P$ and $y \in T$, if and only if $x$ has no path to $s$, then $(x,s) \in\;\parallel$. Otherwise, $(x,s) \notin\;\parallel$. This is summarized by the following theorem: %

\begin{theorem}
	\label{theorem:PostSetNoPathParallel}
	Let $WN=(P,T,F,i,o)$ be a sound AFW-net. %
	\begin{gather}
		\forall (x,y) \in\;\parallel \; \forall s \in \postset{y}\colon \nonumber \\
		\Paths{x}{s}{}{a} = \emptyset \quad \longleftrightarrow \quad (x,s) \in\;\parallel \tag*{\defend}
	\end{gather}
\end{theorem}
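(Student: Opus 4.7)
The statement is essentially the union of Corollary~\ref{cor:Case1NoPath} and Corollary~\ref{cor:Case2NoPath}, so my plan is to carry out a simple case distinction on the type of the node $y$ and invoke the appropriate corollary in each case. Fix an arbitrary pair $(x,y)\in\;\parallel$ and an arbitrary $s\in\postset{y}$. Since $P$ and $T$ are disjoint by Def.~\ref{def:PetriNet} and $y\in P\cup T$, exactly one of $y\in T$ or $y\in P$ holds.

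In the first case, $y\in T$, the hypotheses of Cor.~\ref{cor:Case1NoPath} are satisfied (we have $y\in T$, $(x,y)\in\;\parallel$, and $s\in\postset{y}$), so the corollary yields directly the biconditional $\Paths{x}{s}{}{a}=\emptyset \;\longleftrightarrow\; (x,s)\in\;\parallel$. In the second case, $y\in P$, the hypotheses of Cor.~\ref{cor:Case2NoPath} are satisfied in the same way and we obtain the same biconditional. Combining the two cases gives the claim for all $(x,y)\in\;\parallel$ and all $s\in\postset{y}$.

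Because this is a straightforward case split, there is no genuine obstacle — the heavy lifting has already been done in the proofs of Cor.~\ref{cor:Case1NoPath} and Cor.~\ref{cor:Case2NoPath}, where soundness (via the Path-to-End Theorem~\ref{theorem:PathToEndTheorem}) and the free-choice property were used to show that the absence of a path from $x$ to $s$ is both necessary (Cor.~\ref{cor:ParallelPath}) and sufficient for concurrency. The only thing one might want to double-check in the write-up is that the appeal to Cor.~\ref{cor:Case2NoPath} is legitimate even when $x\in T$ (rather than only $x\in P$); since neither the statement nor the proof of Cor.~\ref{cor:Case2NoPath} places any restriction on the type of $x$, this is fine, and no additional argument is needed.
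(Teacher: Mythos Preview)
Your proposal is correct and matches the paper's own proof exactly: the paper simply states that the theorem ``combines all cases from Cor.~\ref{cor:Case1NoPath} and Cor.~\ref{cor:Case2NoPath},'' which is precisely your case split on whether $y\in T$ or $y\in P$. Your write-up is in fact more explicit than the paper's one-line justification.
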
 

\begin{proof}
\proofsize The theorem combines all cases from Cor.~\ref{cor:Case1NoPath} and Cor.~\ref{cor:Case2NoPath}. $\checkmark$ \proofend
\end{proof}

This fact can already be used to revise the \emph{KovEs} algorithm. However, we consider another revision.

\subsection{Second Revision}

Following the first revision, paths play a crucial role to identify concurrency. For this reason, we define an auxiliary relation: %

\begin{definition}[$HasPath$ relation]
	\label{def:HasPath}
	Let $WN=(P,T,F,i,o)$ be an AFW-net. The $HasPath$ relation %
	\begin{equation}
		HasPath = \big \{ (x,y) \in (P \cup T) \times (P \cup T) \; | \; \Paths{x}{y}{}{a} \not = \emptyset \big \}
	\end{equation} %
	specifies whether a node $x$ has an acyclic path to node $y$ ($HasPath$ is reflexive, so $(x,x) \in HasPath$). $HasPath(x)$ denotes the set of all nodes to which $x$ has a path (again, inclusive of itself). \hfill \defend
\end{definition}

In the following, let $(x,y) \in\;\parallel$ for a sound AFW-net. From $(x,y) \in\;\parallel$ and Cor.~\ref{cor:ParallelPath} it follows that $y \notin HasPath(x)$ and $x \notin HasPath(y)$. Let us consider any node $a \in HasPath(x)$ to which $x$ \emph{has} a path. If $a$ would have a path to $y$ (i.\,e., $y \in HasPath(a)$), then $x$ would have a path to $y$ via $a$. For this reason, no node $a \in HasPath(x)$ has a path to $y$ and no node in $HasPath(y)$ has a path to $x$.

Regarding $(x,y) \in\;\parallel$, there are exactly two cases: \textbf{(a)} No path starting in $x$ to any sink crosses any path starting in $y$ to any sink (i.\,e., $HasPath(x) \cap HasPath(y) = \emptyset$, cf. Figure~\ref{fig:HelpPNHasPath} Case (a)); or \textbf{(b)} at least one path starting in $x$ to a sink crosses at least one path starting in $y$ to a sink (i.\,e., $HasPath(x) \cap HasPath(y) \not = \emptyset$, cf. Figure~\ref{fig:HelpPNHasPath} Case (b)). 

To case \textbf{(a)}: each node in $HasPath(x)$ must be in concurrency relation with any node in $HasPath(y)$ after a stepwise application of Theorem~\ref{theorem:PostSetNoPathParallel} regarding the first revision (and, since tokens in both sets can never converge). This is illustrated in Figure~\ref{fig:HelpPNHasPath} by nodes $a$ and $b$ being concurrent. 

To case \textbf{(b)}: the nodes of $HasPath(x)$ are partially overlapping with the nodes of $HasPath(y)$. This is illustrated in Figure~\ref{fig:HelpPNHasPath} Case (b) with a gray triangle subset. Let $R_{x}^{\bar{y}} = HasPath(x) \setminus HasPath(y)$ be the subset of nodes of $HasPath(x)$ to which $y$ has no path (and, therefore, no node in $HasPath(y)$ as explained previously). A node $a \in R_{x}^{\bar{y}}$ has a subset $HasPath(a) \subseteq HasPath(x)$ (illustrated as the grid triangle subset in Figure~\ref{fig:HelpPNHasPath}). Although no node in $HasPath(y)$ has a path to $a$, $a$ may have paths to nodes in $HasPath(y)$ (more precisely, to nodes in $HasPath(a) \cap HasPath(y)$; in Figure~\ref{fig:HelpPNHasPath} this is the subset of the gray triangle intersecting the grid triangle). Regarding Theorem~\ref{theorem:PathNotParallel}, $a$ cannot be concurrent to those nodes ($c$ in the illustration). $R_{y}^{\bar{a}} = HasPath(y) \setminus HasPath(a)$ contains all nodes of $HasPath(y)$ to which $a$ has no path (and so no other node of $HasPath(a)$). As a consequence, for each node $a \in R_{x}^{\bar{y}}$, there are sets $HasPath(a)$ and $HasPath(y) \setminus HasPath(a) = R_{y}^{\bar{a}}$ being disjoint. There are disjoint paths from $x$ to each node in $HasPath(a)$ and from $y$ to each node in $R_{y}^{\bar{a}}$. Following the first revision and Theorem~\ref{theorem:PostSetNoPathParallel}, a step-wise consideration of these paths  leads to concurrency between all nodes in $HasPath(a)$ and $R_{y}^{\bar{a}}$. Instead of considering these nodes step-by-step, their pairwise concurrency can be added directly, i.\,e., $HasPath(a) \times R_{y}^{\bar{a}} \; \subseteq \; \parallel$. This leads to a revised algorithm for sound AFW-nets with a quadratic computational complexity.

\begin{figure}[tb]
	\centering
		\includegraphics[width=0.50\textwidth]{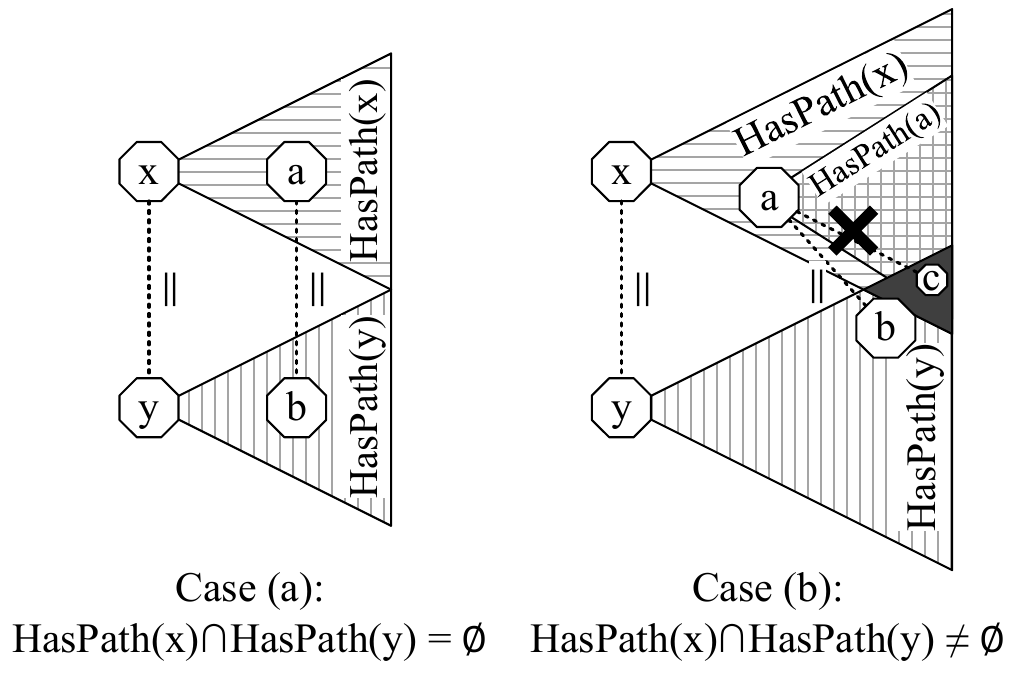}
	\caption{Two cases for mappings $HasPath(x)$ and $HasPath(y)$ of a concurrent pair $(x,y) \in\;\parallel$.}
	\label{fig:HelpPNHasPath}
\end{figure}

\subsection{Revised Algorithm}

Algorithm~\ref{alg:Concurrency} defines the revised algorithm for sound AFW-nets. The algorithm computes the concurrency relation for each node of the net. However, if necessary, it can be modified to just compute the relations for places. Furthermore, the algorithm assumes the relation $\parallel$ to be represented as an \emph{adjacency list}. Naturally, it takes just $O(|\parallel|)$ to put it into a set of pairs. 

Lines~2--6 of Algorithm~\ref{alg:Concurrency} initialize the algorithm. Lines~3--5 initialize the set $R$ for each unprocessed node with an empty set (following the \emph{KovEs} algorithm, which is here the adjacency list of concurrent nodes). This initialization is linear to the number of nodes, $O(|P| + |T|)$. Line~6 computes the $HasPath$ relation as adjacency list by calling a function $computeHasPath$ of Algorithm~\ref{alg:HasPath}. The algorithm computes $HasPath$ utilizing a reverse topological order (each node appears \emph{after} all its output nodes). Such order can be computed in $O(|P| + |T| + |F|)$ \cite{DBLP:books/daglib/0023376}. Thus, Algorithm~\ref{alg:HasPath} can compute $HasPath$ for all nodes in $O(|P| + |T| + |F|)$ as well since each node ($|P \cup T|$) is investigated with its outputs ($|F|$). Assuming that set operations (such as $\cup$) can be achieved in constant time, e.\,g., with a \emph{BitSet}, $HasPath$ can be computed for all nodes in $O(|P| + |T| + |F|)$. 

The actual computation of the concurrency relation takes place in lines~8--17. It creates a new adjacency list $I$ for initial places in concurrency relation (line~8). It investigates all transitions $t$ (line~9) and its output places $x$ (line~10). Each $x$ is initial concurrent to all other output places of $t$ (line~11). Lines~12--13 investigate all pairs of different places being initially in concurrency relation in $I$. Line~14 computes the set $R_{x}^{\bar{y}}$ containing all nodes to which $x$ but not $y$ has a path. As discussed in the \emph{Second Revision}, each node $a \in R_{x}^{\bar{y}}$ is investigated in lines~15--16, where line~16 adds all nodes in $HasPath(y) \setminus HasPath(a)$ to be concurrent to $a$ (i.\,e., all nodes in $HasPath(y)$ to which $a$ does not have a path; recall that $a$ does not have a path to any node of $HasPath(y)$ because it is in $R_{x}^{\bar{y}}$).

\begin{algorithm}[tb]
\caption{The acyclic \emph{Concurrent Paths} algorithm to determine $\parallel$ for a given sound AFW-net $WN=(P,T,F,i,o)$.}
\label{alg:Concurrency}
\begin{algorithmic}[1]
\Function{determineConcurrency}{$WN=(P,T,F,i,o)$, $R = \emptyset$}
	\State \emph{// Initialize}
	\ForAll{$x \in P \cup T$}
		\If{$x \notin R$}
			\State $R(x) \gets \emptyset$
		\EndIf
	\EndFor
	\State $HasPath \gets$ \Call{computeHasPath}{$WN$}
	\State \emph{// Compute}
	\State $I \gets \emptyset$
	\ForAll{$t \in T$}
		\ForAll{$x \in \postset{t}$}
			\State $I(x) \gets I(x) \cup \big(\postset{t} \setminus \{ x \}\big)$
		\EndFor
	\EndFor
	\ForAll{keys $x \in I$}
		\ForAll{$y \in I(x)$}
			\State $R_{x}^{\bar{y}} \gets HasPath(x) \setminus HasPath(y)$
			\ForAll{$a \in R_{x}^{\bar{y}}$}
				\State $R(a) \gets R(a) \cup \big( HasPath(y) \setminus HasPath(a) \big)$
			\EndFor
		\EndFor
	\EndFor
	\State \Return $R$
\EndFunction
\end{algorithmic}
\end{algorithm}

\begin{algorithm}[tb]
\caption{Computation of $HasPath$ as adjacency list for a given AFW-net $WN=(P,T,F,i,o)$.}
\label{alg:HasPath}
\begin{algorithmic}[1]
\Function{computeHasPath}{$WN=(P,T,F,i,o)$}
	\State \emph{// Initialize}
	\State $HasPath \gets \emptyset$
	\ForAll{$x \in P \cup T$}
		\State $HasPath(x) \gets \emptyset$
	\EndFor
	\State $L \gets P \cup T$ in reverse topological order starting from $o$
	\ForAll{$x \in L$}
		\State $HasPath(x) \gets \{ x \} \cup \bigcup_{s \in \postset{x}} HasPath(s)$
	\EndFor
	\State \Return $HasPath$
\EndFunction
\end{algorithmic}
\end{algorithm}

The time complexity for lines~8--16 seems biquadratic, $O(X^4)$, at the first view. However, let us change the perspective on the algorithm focusing on line~16. If line~16 \emph{adds} new information to $R(a)$, at least one new pair in concurrency was detected. That means if line~16 \emph{always} adds new information $R(a)$ and, therefore, finds a new pair in concurrency, lines~8--16 can execute line~16 \emph{at most} $|\parallel| \leq |(P \cup T) \times (P \cup T)|$ times, i.\,e., $O(|P|^2 + |T|^2)$ --- the algorithm would be quadratic. In other words, it should not happen that line~16 is executed without necessity. This is actually the case as we will explain in the following. 

The main task of lines~8--13 is to create and consider a \emph{new} pair $(x,y)$, which is taken from each $\postset{t}$ of $t$ in lines~8--11. There can only be $|P|^2$ of such pairs (as adjacency list) in $I$. Apart from this pair (and, of course, $(y,x)$), let us consider all other ``initial'' pairs $(\alpha,\beta) \not = (x,y) \not = (y,x)$ of lines~8--11. If for transition $t$ with $x,y \in \postset{t}$ it is valid that $t \in HasPath(\alpha)$, then $\postset{t} \subset HasPath(\alpha)$, and if it is valid that $t \in HasPath(\beta)$, then $\postset{t} \subset HasPath(\beta)$. For this reason, it is valid that either: %
\begin{enumerate}
	\item $\{ x,y \} \subset HasPath(\alpha)$, 
	\item $\{ x,y \} \subset HasPath(\beta)$, or 
	\item $\{ x,y \} \not \subset \big( HasPath(\alpha) \cup HasPath(\beta) \big)$.
\end{enumerate} %
As a consequence, all ``initial'' pairs $(x,y)$ and $(y,x)$ cannot be added by the handling of other ``initial'' pairs. The same holds true for all nodes in $R_{x}^{\bar{y}}$ of line~14 whose information to be concurrent to $y$ is firstly added in line~16. Line~16, therefore, \emph{always} adds new information. As explained earlier, line~16 can at most be executed in quadratic complexity making lines~8--16 quadratic, $O(|P|^2 + |T|^2)$. 

In summary, assuming that set operations are applicable in constant time (like $\cup$ and $\setminus$), e.\,g., by a \emph{BitSet}, Algorithm~\ref{alg:Concurrency} can be computed in $O\big((|P| + |T|) + (|P| + |T| + |F|) + |P| + |P|^2 + |T|^2\big) = O\big(3|P| + 2|T| + |F| + |P|^2 + |T|^2\big) = O(|P|^2 + |T|^2)$ in the worst-case (as $|F|$ is quadratic to $|P| + |T|$ in the worst-case). Since at least those pairs in concurrency must be investigated, which are a subset of $(P \cup T) \times (P \cup T)$, it will be difficult to obtain a faster algorithm in terms of asymptotic time complexity than $O(|\parallel|) \subseteq O(|P|^2 + |T|^2)$.

\section{Cubic Algorithm for FW-Nets}
\label{sec:CyclicNets}

Although AFW-nets are not unusual, concurrency in cyclic sound FW-nets must be considered as well. In contrast to the \emph{KovEs} algorithm, Algorithm~\ref{alg:Concurrency} cannot be applied to cyclic nets since the revisions are applicable to acyclic nets only. To overcome this situation, we use a method called \emph{loop decomposition} \cite{DBLP:conf/bpm/PrinzCH22} to decompose a cyclic sound FW-net into a set of sound AFW-nets with the same behavior. Cyclic nets become acyclic by replacing loops with single \emph{loop places}. Figure~\ref{fig:ExampleCyclicPN} shows a cyclic net on the left side, which is decomposed into three acyclic nets on the right side. Subsequently, we apply Algorithm~\ref{alg:Concurrency} to each acyclic net and combine all collected concurrency information, i.\,e., all nodes concurrent to an inserted \emph{loop place} are concurrent to all nodes in the corresponding loop.

\subsection{Loop Decomposition}

The method of loop decomposition \cite{DBLP:conf/bpm/PrinzCH22} was introduced to decompose sound \emph{workflow graphs} with loops into sets of sound workflow graphs without loops. This decomposition method was slightly revised in~\cite{Prinz2023Preprint}. Favre et al. \cite{DBLP:journals/is/FavreFV15} have shown how free-choice nets can be transferred into workflow graphs and vice versa. The notion of soundness used in the work of loop decomposition is equal to soundness of FW-nets \cite{DBLP:journals/dke/FahlandFKLVW11}. For this reason, loop decomposition can be applied to sound FW-nets without strong modifications. Algorithm~\ref{algo:LoopDecomposition} describes the algorithm abstractly. The following only focuses on the consequences for concurrency detection. For further details about loop decomposition, we refer to previous work~\cite{DBLP:conf/bpm/PrinzCH22,Prinz2023Preprint}.

\begin{algorithm}[tb]
\caption{Loop decomposition of a given sound FW-net $WN=(P,T,F,i,o)$.}
\label{algo:LoopDecomposition}
\begin{algorithmic}[1]
\State $Connections \gets \emptyset$
\State $AcyclicNets \gets \emptyset$
\Function{decomposeLoops}{$WN=(P,T,F,i,o)$}
	\State Identify $\Loops{WN}$.
	\If{$|\Loops{WN}| = 0$}
		\State $AcyclicNets \gets AcyclicNets \cup \{ WN \}$
		\State \Return $AcyclicNets$, $Connections$
	\EndIf
	\ForAll{$L=(P_L,T_L,F_L) \in \Loops{WN}$}
		\State Identify loop entries $\Entries{L}$ and loop exits $\Exits{L}$.
		\State Identify do-body $\DoBody{L}$ of $L$.
		\State Copy do-body and relink flows.
		\State Replace $L$ with place $p_L$ in $WN$ and relink flows to $p_L$.
		\State $Connections \gets Connections \cup \{ \big(p_L, P_L \cup T_L\big) \}$
		\State Split $L$ into \emph{loop fragments}.
		\State Insert one source and one sink place to combine loop fragments.
		\State Create new net $WN_L$ of it.
		\State \Call{decomposeLoops}{$WN_L$}
	\EndFor
	\State \Call{decomposeLoops}{$WN$}
	\State \Return $AcyclicNets$, $Connections$
\EndFunction
\end{algorithmic}
\end{algorithm}

Loop decomposition identifies loops as strongly connected components (cf. Def.~\ref{def:Loop} of loops). Following \cite{DBLP:conf/bpm/PrinzCH22}, entries and exits of loops are places in sound FW-nets. The (sometimes unconnected) subgraph between all loop entries and those loop exits being reachable without passing another loop exit is called the \emph{do-body} of the loop. Once a loop exit contains a token, no other place in the loop or the do-body contains a token \cite{DBLP:conf/bpm/PrinzCH22}. Therefore, the do-body can be interpreted as an implicit, initial ``converging area'' of different concurrent tokens before first loop exits are reached. Loop decomposition duplicates the do-body as an explicit, initial converging area before the loop. Subsequently, it replaces the entire loop (without the copied do-body) with a single \emph{loop place} representing the previous loop. All flows into and out of the loop are redirected to start from and end at the loop place. Repeating this procedure with any (nested) loop finally leads to an acyclic sound FW-net \cite{DBLP:conf/bpm/PrinzCH22}.

The extracted loops with all their nodes and flows are decomposed by removing all incoming flows of loop exits. As a consequence, the loop disintegrates into at least one \emph{loop fragment}. In the original method of loop decomposition \cite{DBLP:conf/bpm/PrinzCH22}, each fragment is extended to an own net. This is unnecessary in this context of sound FW-nets, since all fragments of the same loop are mutually exclusive \cite{DBLP:conf/bpm/PrinzCH22}, i.\,e., no node of one fragment can ever be in a concurrency relation with a node of another fragment of the same loop. Therefore, all fragments of a loop together get a new single source place and a new single sink place. For each loop entry and exit, a new transition is inserted connecting the source place with the entry/exit. For each loop exit, a new transition is inserted to connect it with the new sink place. In addition, all transitions previously in the preset of a loop exit are connected to the new sink place. The resulting ``loop'' net is sound and free-choice. In case that this \emph{loop net} is still cyclic (in case of nested loops), loop decomposition can be recursively applied to this net again. This procedure finally terminates in only sound AFW-nets \cite{DBLP:conf/bpm/PrinzCH22}. 

Figure~\ref{fig:ExampleCyclicPN} illustrates the decomposition of a cyclic FW-net into its set of AFW-nets. The cyclic net on the left contains two loops: One with place $b$ as entry and $f$ as exit and one with place $h$ as entry and $j$ and $k$ as exits. The net is decomposed into three acyclic nets (right). The net at the top is the loop-reduced version of the cyclic net. The do-bodies (loop 1: $\{ b, B, c, d, C, e, D \}$, loop 2: $\{ h, H, i, I \}$) remained and the loops were replaced with loop places ($\alpha$ and $\beta$). The acyclic net in the middle corresponds to the upper loop place $\alpha$ and spans from before its loop exit to its loop exit. The net at the bottom corresponds to the lower loop place $\beta$. It consists of two fragments (from exit $j$ to $k$ and from exit $k$ to $j$) and is constructed by adding a source place, sink place, and the remaining transitions.

\begin{figure}[tb]
	\centering
		\includegraphics[width=1.00\textwidth]{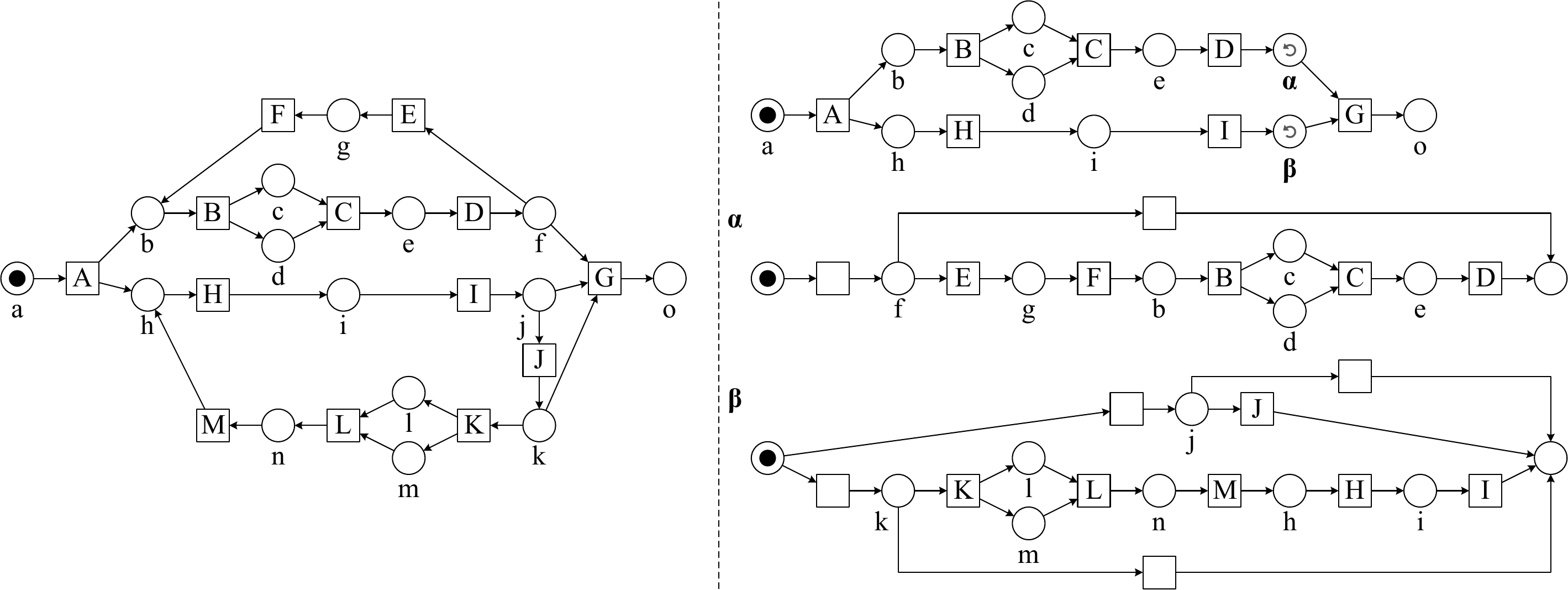}
	\caption{A sound cyclic FW-net with two loops (left) and its decomposition into three sound acyclic nets (right). The net $\alpha$ in the middle and the net $\beta$ in the bottom represent both loops, which were reduced in the net above with (loop) places $\alpha$ and $\beta$.}
	\label{fig:ExampleCyclicPN}
\end{figure}

\subsection{The Algorithm}

The overall situation after loop decomposition is a set of sound AFW-nets, in which some \emph{loop} places are linked to acyclic \emph{loop} nets. It is important to understand for correctness that the replacement of (parts of) loops with loop places does not change the concurrency behavior. That is, in sound FW-nets, an entire loop (after copying the do-body) acts as a place ``globally'' \cite{DBLP:conf/bpm/PrinzCH22}. For the computation of the concurrency relation it follows that each node in a concurrency relation with a \emph{loop place} is concurrent with each node of the linked loop. Newly inserted transitions and places in the loop nets to facilitate the decomposition are \emph{not} of interest, naturally. Nodes of loops in the do-body appear (at least) twice --- once in the surrounding net and once in the corresponding loop net. For instance, places $b$ and $e$ in Figure~\ref{fig:ExampleCyclicPN} are in the AFW-net in the top and in the $\alpha$-loop net in the middle. This does not have an influence on the final result, since if a node remaining in the surrounding net is in concurrency relation with another node, both nodes also are in concurrency relation in the original net. The same appears with nodes represented by the loop place being in concurrency relation with other nodes.

Algorithm~\ref{alg:ConcurrencyCyclic} describes the \emph{Concurrent Paths} (\emph{CP}) algorithm. It computes the adjacency set $R$ of concurrent nodes. This is initialized with an empty set in line~2. Line~3 calls Algorithm~\ref{algo:LoopDecomposition} to decompose net $WN$ into a set of acyclic nets $AcyclicNets$ and a $Connections$ relation. Lines~4--5 compute (and extend) $R$ for each acyclic net $WN_a$ with Algorithm~\ref{alg:Concurrency}. Lines~6--16 replace concurrency relations between loop places and other nodes. In doing this, it considers each connection between a loop place $l$ and a linked loop net $A$ (as combination of places and transitions) in line~6. For $l$, it considers each node $c$ that is concurrent to $l$ (line~7). If $c$ is also a loop place with a linked loop net, the nodes of that loop net are assigned to $B$ (lines~8--9). Otherwise, if $c$ is an ordinary node, $B$ only consists of $\{ c \}$ (line~11). Furthermore, although $c$ is concurrent to the loop place $l$, it is not of interest since $l$ is ``virtual'' (line~12). Then, each node $a$ in $A$ is concurrent to each node in $B$ (lines~13--14). In addition, each node $b$ in $B$ is concurrent to each node in $A$ (lines~15--16). 

\begin{algorithm}[tb]
\caption{The \emph{Concurrent Paths} (\emph{CP}) algorithm: A cyclic version of the algorithm to determine $\parallel$ for a sound FW-net $WN=(P,T,F,i,o)$.}
\label{alg:ConcurrencyCyclic}
\begin{algorithmic}[1]
\Function{ConcurrentPaths}{$WN=(P,T,F,i,o)$}
	\State $R \gets \emptyset$
	\State $AcyclicNets$, $Connections$ $\gets$ \Call{decomposeLoops}{$WN$}
	\ForAll{$WN_{a} \in AcyclicNets$}
		\State $R \gets$ \Call{determineConcurrency}{$WN_{a}$, $R$}		
	\EndFor	
	\ForAll{$(l,A) \in Connections$}
		\ForAll{$c \in R(l)$} 
			\If{$c \in Connections$}
				\State $B \gets Connections(c) \cap (P \cup T)$				
			\Else			
				\State $B \gets \{ c \}$
			\EndIf
			\State $R(c) \gets R(c) \setminus \{ l \}$
			\ForAll{$a \in A$}
				\State $R(a) \gets R(a) \cup B$
			\EndFor
			\ForAll{$b \in B$}
				\State $R(b) \gets R(b) \cup A$
			\EndFor
		\EndFor
	\EndFor	
	\State \Return $R$
\EndFunction
\end{algorithmic}
\end{algorithm}

The time complexity of the algorithm depends on the time complexity of loop decomposition as well as on the (possible) increase of the problem size. Loop decomposition is achievable in $O(|P|^2 + |P| \cdot |T| + |P| \cdot |F|)$ \cite{DBLP:conf/bpm/PrinzCH22,Prinz2023Preprint}. In the worst case of many nested loops, the problem size after loop decomposition may increase quadratically, i.\,e., instead of checking one net of size $|P| + |T| + |F|$, $|P|$ nets of size $|P| + |T| + |F|$ must be investigated as a result of the decomposition (in general, the number of (nested) loops in a net is limited by a low constant). Thus, the complexity of checking a single cyclic net may increase to $O(|P|^2 + |T| \cdot |P| + |P| \cdot |F|)$ in the worst case. In this worst case, the application of Algorithm~\ref{alg:Concurrency} on $|P|$ nets finally leads to a $O(|P|^3 + |P| \cdot |T|^2)$ and, therefore, cubic time complexity. For this reason, Algorithm~\ref{alg:ConcurrencyCyclic} has the same time behavior as the \emph{KovEs} algorithm in the worst case. Although it seems that Algorithm~\ref{alg:ConcurrencyCyclic} has no benefit regarding the \emph{KovEs} algorithm, there are several reasons why there are situations, in which the algorithm has its advantages: %
\begin{enumerate}[label=(\roman*)]
	\item The algorithm can be parallelized. At first, each resulting acyclic net can be analyzed in parallel. The combination of the results is possible in at most quadratic time. At second, once $HasPath$ is computed for a net in at most quadratic time, all transitions can be computed in parallel. The combination of the results is at most quadratic as well. 
	\item It is relatively rare that nets have many (nested) loops \cite{DBLP:conf/bpm/PrinzCH22}. In addition, if a loop has only one entry, it is not necessary to consider the do-body. The same holds true if the do-body does not contain any converging transition.
	\item If the net has a high degree of concurrency involving many nodes, the algorithm should have a performance benefit compared to \emph{KovEs}.
	\item Of course, if the net is acyclic, the algorithm is faster.
\end{enumerate}

\subsection{Example}

To illustrate how Algorithm~\ref{alg:ConcurrencyCyclic} works, we use the cyclic example net of Figure~\ref{fig:ExampleCyclicPN}. At first, loop decomposition decomposes the cyclic net into the acyclic nets on the right side of Figure~\ref{fig:ExampleCyclicPN}. Subsequently, it performs for each of the acyclic nets the acyclic \emph{CP} Algorithm~\ref{alg:Concurrency}. Table~\ref{tab:ExHasPathR} contains the relation $HasPath$ and the temporary relation $R$ for each of the acyclic nets and each of their nodes. For example, if transition $A$ is investigated, the pair $(b,h)$ is considered. The sets $HasPath(b)$ and $HasPath(h)$ can be found in Algorithm~\ref{tab:ExHasPathR}. It is valid that $HasPath(b) \setminus HasPath(h) = \{ b, B, c, d, C, e, D, \alpha \}$. For each of these nodes, we can extend the relation to, e.\,g., $R(b) = R(b) \cup \big( HasPath(b) \setminus HasPath(h) \big) = R(b) \cup \{ h, H, i, I, \beta \}$.

\begin{table}[tb]
	\centering
	\showpicture{\includegraphics[width=0.4\textwidth]{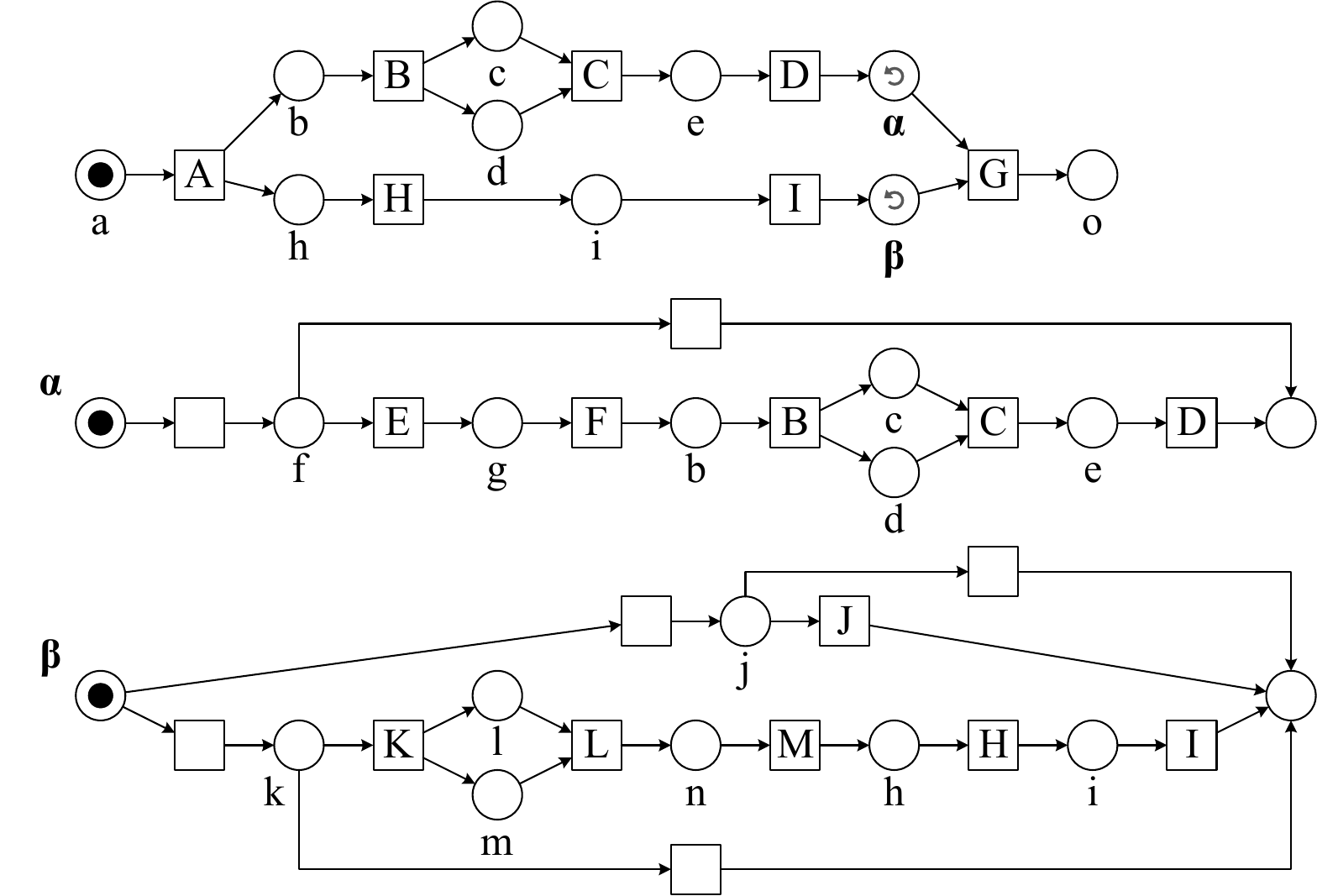} \\ \vspace{0.5cm}}
	\scriptsize
		\begin{tabular}{ccccp{5.6cm}cp{2.6cm}}
			Net && Node(s) && $HasPath$ && $R$ ($\parallel$) \\
			\cline{1-1} \cline{3-3} \cline{5-5} \cline{7-7} \\
			\multirow{\netonesize}{*}{1} \extakeout{&& $a$ && %
			$a$, $A$, $b$, $B$, $c$, $d$, $C$, $e$, $D$, $\alpha$, $G$, $h$, $H$, $i$, $I$, $\beta$, $o$ && %
			$\emptyset$ \\
			&& $A$ && %
			$A$, $b$, $B$, $c$, $d$, $C$, $e$, $D$, $\alpha$, $G$, $h$, $H$, $i$, $I$, $\beta$, $o$ && %
			$\emptyset$ \\}{&& $a$, $A$, $G$, $o$ && \emph{[Not of interest for this example]} && $\emptyset$ \\}
			&& $b$ && %
			$b$, $B$, $c$, $d$, $C$, $e$, $D$, $\alpha$, $G$, $o$ && %
			$h$, $H$, $i$, $I$, $\beta$ \\
			&& $B$ && %
			$B$, $c$, $d$, $C$, $e$, $D$, $\alpha$, $G$, $o$ && %
			$h$, $H$, $i$, $I$, $\beta$ \\
			&& $c$ && %
			$c$, $C$, $e$, $D$, $\alpha$, $G$, $o$ && %
			$d$, $h$, $H$, $i$, $I$, $\beta$ \\
			&& $d$ && %
			$d$, $C$, $e$, $D$, $\alpha$, $G$, $o$ && %
			$c$, $h$, $H$, $i$, $I$, $\beta$ \\
			&& $C$ && %
			$C$, $e$, $D$, $\alpha$, $G$, $o$ && %
			$h$, $H$, $i$, $I$, $\beta$ \\
			&& $e$ && %
			$e$, $D$, $\alpha$, $G$, $o$ && %
			$h$, $H$, $i$, $I$, $\beta$ \\
			&& $D$ && %
			$D$, $\alpha$, $G$, $o$ && %
			$h$, $H$, $i$, $I$, $\beta$ \\
			&& $\alpha$ && %
			$\alpha$, $G$, $o$ && %
			$h$, $H$, $i$, $I$, $\beta$ \\			
			&& $h$ && %
			$h$, $H$, $i$, $I$, $\beta$, $G$, $o$ && %
			$b$, $B$, $c$, $d$, $C$, $e$, $D$, $\alpha$ \\
			&& $H$ && %
			$H$, $i$, $I$, $\beta$, $G$, $o$ && %
			$b$, $B$, $c$, $d$, $C$, $e$, $D$, $\alpha$ \\
			&& $i$ && %
			$i$, $I$, $\beta$, $G$, $o$ && %
			$b$, $B$, $c$, $d$, $C$, $e$, $D$, $\alpha$ \\
			&& $I$ && %
			$I$, $\beta$, $G$, $o$ && %
			$b$, $B$, $c$, $d$, $C$, $e$, $D$, $\alpha$ \\
			&& $\beta$ && %
			$\beta$, $G$, $o$ && %
			$b$, $B$, $c$, $d$, $C$, $e$, $D$, $\alpha$ \\
			\extakeout{&& $G$ && %
			$G$, $o$ && %
			$\emptyset$ \\}{}
			\extakeout{&& $o$ && %
			$o$ && $\emptyset$ \\}{}
			\cline{1-1} \cline{3-3} \cline{5-5} \cline{7-7} \\
			\multirow{\nettwosize}{*}{2} \extakeout{&& $f$ && %
			$f$, $E$, $g$, $F$, $b$, $B$, $c$, $d$, $C$, $e$, $D$ && %
			$\emptyset$ \\
			&& $E$ && %
			$E$, $g$, $F$, $b$, $B$, $c$, $d$, $C$, $e$, $D$ && %
			$\emptyset$ \\
			&& $g$ && %
			$g$, $F$, $b$, $B$, $c$, $d$, $C$, $e$, $D$ && %
			$\emptyset$ \\
			&& $F$ && %
			$F$, $b$, $B$, $c$, $d$, $C$, $e$, $D$ && %
			$\emptyset$ \\
			&& $b$ && %
			$b$, $B$, $c$, $d$, $C$, $e$, $D$ && %
			$\emptyset$ \\
			&& $B$ && %
			$B$, $c$, $d$, $C$, $e$, $D$ && %
			$\emptyset$ \\}{&& $f$, $E$, $g$, $F$, $b$, $B$, $c$, $d$, $C$, $e$, $D$ && \emph{[Not of interest for this example]} && $\emptyset$ \\}
			&& $c$ && %
			$c$, $C$, $e$, $D$ && %
			$d$ \\
			&& $d$ && %
			$d$, $C$, $e$, $D$ && %
			$c$ \\
			\extakeout{&& $C$ && %
			$C$, $e$, $D$ && %
			$\emptyset$ \\
			&& $e$ && %
			$e$, $D$ && %
			$\emptyset$ \\
			&& $D$ && %
			$D$ && %
			$\emptyset$ \\}{}
			\cline{1-1} \cline{3-3} \cline{5-5} \cline{7-7} \\
			\multirow{\netthreesize}{*}{3} \extakeout{&& $j$ && %
			$j$, $J$ && %
			$\emptyset$ \\
			&& $J$ && %
			$J$ && %
			$\emptyset$ \\
			&& $k$ && %
			$k$, $K$, $l$, $m$, $L$, $n$, $M$, $h$, $H$, $i$, $I$ && %
			$\emptyset$ \\
			&& $K$ && %
			$K$, $l$, $m$, $L$, $n$, $M$, $h$, $H$, $i$, $I$ && %
			$\emptyset$ \\}{&& $j$, $J$, $k$, $K$, $L$, $n$, $M$, $h$, $H$, $i$, $I$ && \emph{[Not of interest for this example]} && $\emptyset$ \\}
			&& $l$ && %
			$l$, $L$, $n$, $M$, $h$, $H$, $i$, $I$ && %
			$m$ \\
			&& $m$ && %
			$m$, $L$, $n$, $M$, $h$, $H$, $i$, $I$ && %
			$l$ \\
			\extakeout{&& $L$ && %
			$L$, $n$, $M$, $h$, $H$, $i$, $I$ && %
			$\emptyset$ \\
			&& $n$ && %
			$n$, $M$, $h$, $H$, $i$, $I$ && %
			$\emptyset$ \\
			&& $M$ && %
			$M$, $h$, $H$, $i$, $I$ && %
			$\emptyset$ \\
			&& $h$ && %
			$h$, $H$, $i$, $I$ && %
			$\emptyset$ \\
			&& $H$ && %
			$H$, $i$, $I$ && %
			$\emptyset$ \\
			&& $i$ && %
			$i$, $I$ && %
			$\emptyset$ \\
			&& $I$ && %
			$I$ && %
			$\emptyset$ \\}{}
		\end{tabular}
	\caption{The table shows for each net and their nodes being illustrated in the top, the relations $HasPath$ and $R$. For $HasPath$, we omitted the mention for nodes, which are not in concurrency relation with another node. Both relations were derived after performing Algorithm~\ref{alg:Concurrency}.}
	\label{tab:ExHasPathR}
\end{table}

Once all acyclic nets are investigated, the results can be combined. For the upper acyclic net in Figure~\ref{fig:ExampleCyclicPN}, there are two loop places $\alpha$ and $\beta$. Let us take $\alpha$ as an example: $\alpha$ is already concurrent to $h$, $H$, $i$, $I$, and $\beta$ (cf. Algorithm~\ref{tab:ExHasPathR}). Each of these nodes is considered by Algorithm~\ref{alg:ConcurrencyCyclic}, lines~7--16. It may start with node $h$, which is not a loop place but an ordinary node. Therefore, $B \gets \{h\}$ and $h$ is not concurrent to $\alpha$ anymore (lines~10--12). However, each node in the corresponding net of $\alpha$ (the middle net in Figure~\ref{fig:ExampleCyclicPN}), is concurrent to $h$ (lines~13--14); and $h$ (in $B$) is concurrent to each node of the middle acyclic net. The same holds true for the case of $\beta$: It is a loop place so that $B$ contains all nodes of the corresponding net to $\beta$ (the lower net of Figure~\ref{fig:ExampleCyclicPN}), lines~8--9. That means, each node of the middle net is concurrent to each node of the lower net and, naturally, vice versa (lines~13--16). Table~\ref{tab:ExFinal} summarizes the $R$ ($\parallel$) relations for each node of the original net of Figure~\ref{fig:ExampleCyclicPN}.

\begin{table}[tb]
	\centering
	\showpicture{\includegraphics[width=0.4\textwidth]{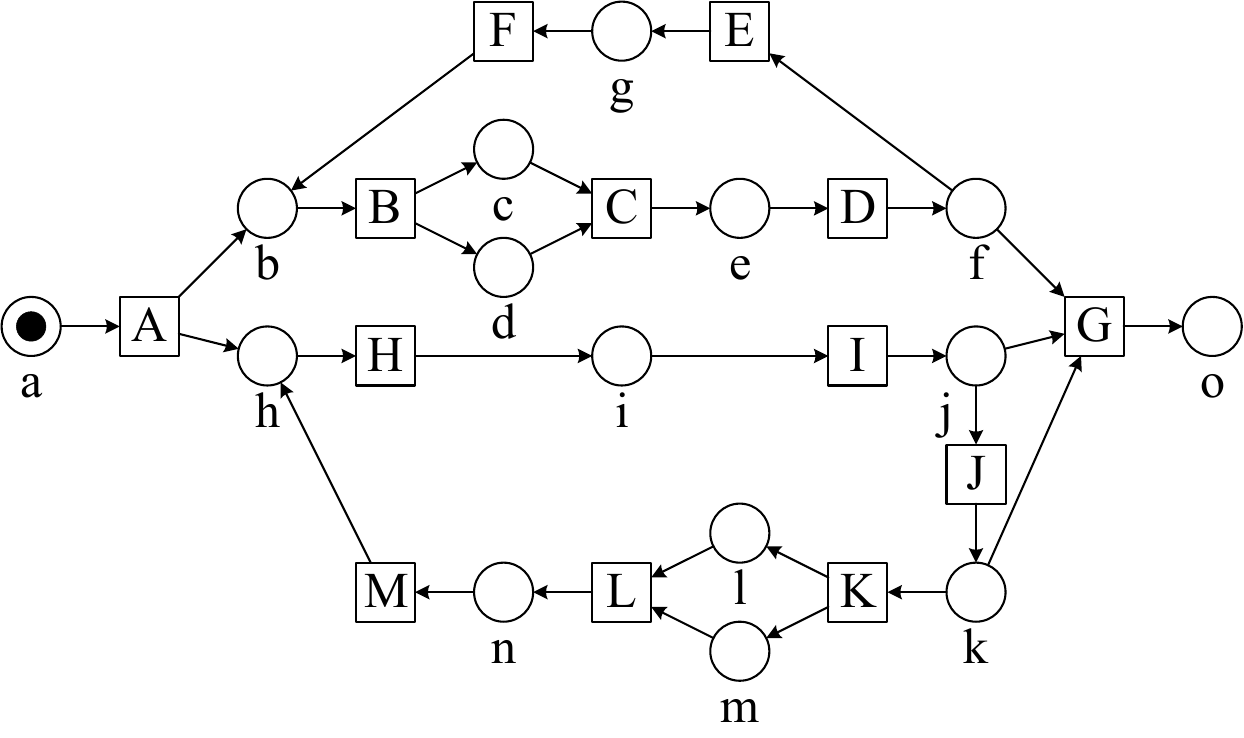} \\ \vspace{0.5cm}}
	\scriptsize
		\begin{tabular}{ccp{4.7cm}cccp{4.1cm}}
			Node && $R$ ($\parallel$) && Node && $R$ ($\parallel$) \\
			\cline{1-1} \cline{3-3} \cline{5-5} \cline{7-7} \\
			$a$ && %
			$\emptyset$ && %
			$h$ && %
			$b$, $B$, $c$, $d$, $C$, $e$, $D$, $f$, $E$, $g$, $F$ \\
			$A$ && %
			$\emptyset$ && %
			$H$ && %
			$b$, $B$, $c$, $d$, $C$, $e$, $D$, $f$, $E$, $g$, $F$ \\
			$b$ && %
			$h$, $H$, $i$, $I$, $j$, $J$, $k$, $K$, $l$, $m$, $L$, $n$, $M$ && %
			$i$ && %
			$b$, $B$, $c$, $d$, $C$, $e$, $D$, $f$, $E$, $g$, $F$ \\
			$B$ && %
			$h$, $H$, $i$, $I$, $j$, $J$, $k$, $K$, $l$, $m$, $L$, $n$, $M$ && %
			$I$ && %
			$b$, $B$, $c$, $d$, $C$, $e$, $D$, $f$, $E$, $g$, $F$ \\
			$c$ && %
			$d$, $h$, $H$, $i$, $I$, $j$, $J$, $k$, $K$, $l$, $m$, $L$, $n$, $M$ && %
			$j$ && %
			$b$, $B$, $c$, $d$, $C$, $e$, $D$, $f$, $E$, $g$, $F$ \\
			$d$ && %
			$c$, $h$, $H$, $i$, $I$, $j$, $J$, $k$, $K$, $l$, $m$, $L$, $n$, $M$ && %
			$J$ && %
			$b$, $B$, $c$, $d$, $C$, $e$, $D$, $f$, $E$, $g$, $F$ \\
			$C$ && %
			$h$, $H$, $i$, $I$, $j$, $J$, $k$, $K$, $l$, $m$, $L$, $n$, $M$ && %
			$k$ && %
			$b$, $B$, $c$, $d$, $C$, $e$, $D$, $f$, $E$, $g$, $F$ \\
			$e$ && %
			$h$, $H$, $i$, $I$, $j$, $J$, $k$, $K$, $l$, $m$, $L$, $n$, $M$ && %
			$K$ && %
			$b$, $B$, $c$, $d$, $C$, $e$, $D$, $f$, $E$, $g$, $F$ \\
			$D$ && %
			$h$, $H$, $i$, $I$, $j$, $J$, $k$, $K$, $l$, $m$, $L$, $n$, $M$ && %
			$l$ && %
			$m$, $b$, $B$, $c$, $d$, $C$, $e$, $D$, $f$, $E$, $g$, $F$ \\
			$f$ && %
			$h$, $H$, $i$, $I$, $j$, $J$, $k$, $K$, $l$, $m$, $L$, $n$, $M$ && %
			$m$ && %
			$l$, $b$, $B$, $c$, $d$, $C$, $e$, $D$, $f$, $E$, $g$, $F$ \\
			$E$ && %
			$h$, $H$, $i$, $I$, $j$, $J$, $k$, $K$, $l$, $m$, $L$, $n$, $M$ && %
			$L$ && %
			$b$, $B$, $c$, $d$, $C$, $e$, $D$, $f$, $E$, $g$, $F$ \\
			$g$ && %
			$h$, $H$, $i$, $I$, $j$, $J$, $k$, $K$, $l$, $m$, $L$, $n$, $M$ && %
			$n$ && %
			$b$, $B$, $c$, $d$, $C$, $e$, $D$, $f$, $E$, $g$, $F$ \\
			$F$ && %
			$h$, $H$, $i$, $I$, $j$, $J$, $k$, $K$, $l$, $m$, $L$, $n$, $M$ && %
			$M$ && %
			$b$, $B$, $c$, $d$, $C$, $e$, $D$, $f$, $E$, $g$, $F$ \\
			$G$ && %
			$\emptyset$ && %
			$o$ && %
			$\emptyset$ \\
		\end{tabular}
	\caption{The nodes of the net in the top being in a concurrency relation $\parallel$ after performing Algorithm~\ref{alg:ConcurrencyCyclic} and combining the $R$ relations in Algorithm~\ref{tab:ExFinal}.}
	\label{tab:ExFinal}
\end{table}


\section{Inclusive Semantics in Process Models}
\label{sec:InclusiveSemantics}

Inclusive semantics is an in-between between exclusive semantics (being achievable with places with at least two transitions in their postsets) and concurrent semantics (being represented with transitions with at least two places in their postset). In inclusive semantics, a non-empty subset of nodes in an inclusive node's postset get tokens and not each place in a node's preset must carry a token to be enabled. Such a semantics is not directly representable in Petri nets, however, an usual element of business process models in business process management. Since process models are usually represented as Petri nets or have at least borrowed semantics, handling inclusive semantics in concurrency detection is of benefit. Although diverging inclusive nodes (``OR-splits'') can be represented with variations of places and transitions in nets; representing converging inclusive nodes (``OR-joins'') may lead to free-choice nets being grown exponentially \cite{DBLP:journals/is/FavreFV15}. Since process models are transformed into nets for analysis in many cases, the non-free-choice property or its exponential growth may lead to computational problems.

The \emph{KovEs} algorithm is \emph{not} applicable to investigate inclusive semantics in process models without strong modifications. However, the \emph{CP} approach is directly applicable if the inclusive semantics proposed in previous work \cite{DBLP:journals/csimq/PrinzA15} and resulting from loop decomposition \cite{DBLP:conf/bpm/PrinzCH22} is assumed.


\section{Evaluation}
\label{sec:Evaluation}

The \emph{KovEs} algorithm and the presented \emph{CP} algorithm in Section~\ref{sec:CyclicNets} have been implemented in a simple script-based algorithm (PHP) for the purpose of evaluation. The implementation is open-source and available on GitHub\footnote{\url{https://github.com/guybrushPrince/cp}}. The following experiments were conducted on a machine with an Intel{\textregistered} Core{\texttrademark} i7 CPU with 4 cores, 16 GB of main memory, and Microsoft Windows 11 Professional. PHP was used in version 8. We performed all measures for 10 times, removed the best and worst times, and used the mean values of all the remaining runs.

Both algorithms were applied to a well-known dataset, namely the \emph{IBM Websphere Business Modeler} dataset \cite{DBLP:journals/dke/FahlandFKLVW11}, which consists of 1,368 files and is referred to as \emph{IBM} hereafter. Only 644 nets of the IBM dataset can be investigated since the algorithm requires sound FW-nets. The nets were available as Petri Net Markup Language (PNML) models. The comparison of both algorithms was restricted to the identification of concurrent places instead of concurrent places and transitions, since the \emph{KovEs} algorithm cannot find concurrent transitions without modifications. The nets under investigation are small (75\% with less or equal 58 nodes) to big (with a maximum of 546 nodes). Places are more frequent than transitions (approx. 61\%$\pm$4\% of all nodes are places). With regard to the number of nodes, a net has approx. 111\%$\pm$11.5\% flows.

Both algorithms find the same places being concurrent for all suitable nets of the IBM dataset. Although the total number of nodes $|N|$ in a net bounds the number of concurrent nodes $|\parallel| \leq |(P+T)^2|$, the number of nodes ($R^2 = 0.2$), places ($R^2 = 0.42$), and transitions ($R^2 = 0.06$) do not well explain the degree of concurrency in a net by applying a linear regression considering whether $|\parallel| \sim |(P+T)^2|$, $|\parallel| \sim |P^2|$, or $|\parallel| \sim |T^2|$, respectively.

The overall goal was to construct a more efficient algorithm than the \emph{KovEs} algorithm. For this reason, we have compared the times both algorithms need to compute all concurrency relations for a net. The \emph{CP} algorithm was overall faster for the entire dataset. It just needs 285 [ms] to compute 192,170 pairs of places being in relation. On contrary, the \emph{KovEs} algorithm requires 14,100 [ms] for doing the same job, i.\,e., the \emph{CP} algorithm is approx. 50 times faster. On closer inspection, \emph{KovEs} has its benefits for nets without a high degree of concurrency. For these cases, it performs better than \emph{CP}. Figure~\ref{fig:VisNodesNodes} shows a chart comparing the number of investigated nodes during both algorithms in relation to the number of nodes in the net. The y-axis is scaled logarithmically. The \emph{CP} algorithm has, of course, a higher ``start up'' number of nodes to investigate, because it visits at least each node and flow once for the computation of paths. \emph{KovEs} instead directly starts with the computation of the concurrency relations and, therefore, has no start up number of nodes and has a better performance for nets with less concurrent nodes. Figure~\ref{fig:VisNodesPairs} illustrates the number of investigated nodes in relation to the number of relations. The chart reveals that the computational load of \emph{KovEs} seems to increase slightly quadratically if the number of nodes in relation increase. Instead, the load of \emph{CP} seems to increase only linearly for a higher number of relations. This correlation between the computational load and the number of relations becomes more obvious if we compare the computation time in relation to the number of nodes being concurrent as it is done in Figure~\ref{fig:TimePairs}. The computation time of the \emph{KovEs} algorithm seems to increase quadratically to the number of concurrent nodes; instead, the computation time of \emph{CP} seems to increase just linearly.

\begin{figure}[tb]
	\centering
		\includegraphics[width=0.70\textwidth]{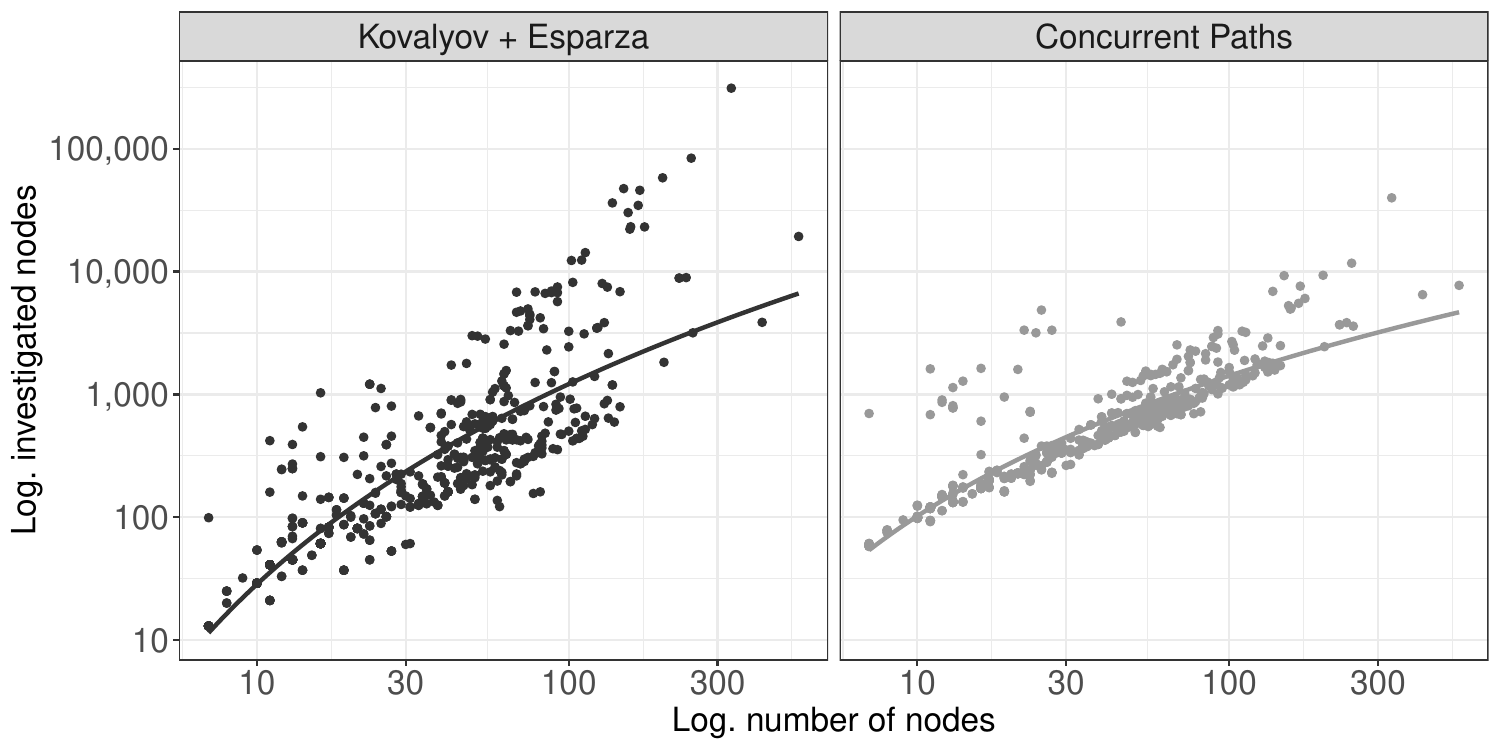}
	\caption{The number of investigated nodes during computation compared to the number of nodes in the nets.}
	\label{fig:VisNodesNodes}
\end{figure}

\begin{figure}[tb]
	\centering
	  \hfill 
		\begin{minipage}[t]{0.45\textwidth}
			\includegraphics[width=1.0\textwidth]{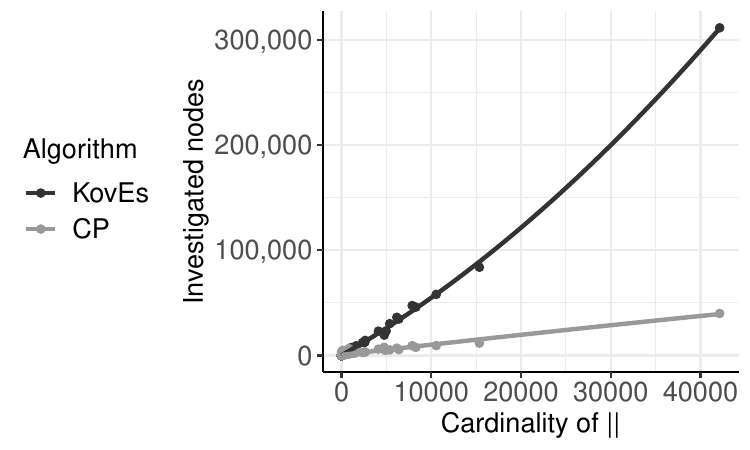}
		\end{minipage} \begin{minipage}[t]{0.45\textwidth}
			\includegraphics[width=1.0\textwidth]{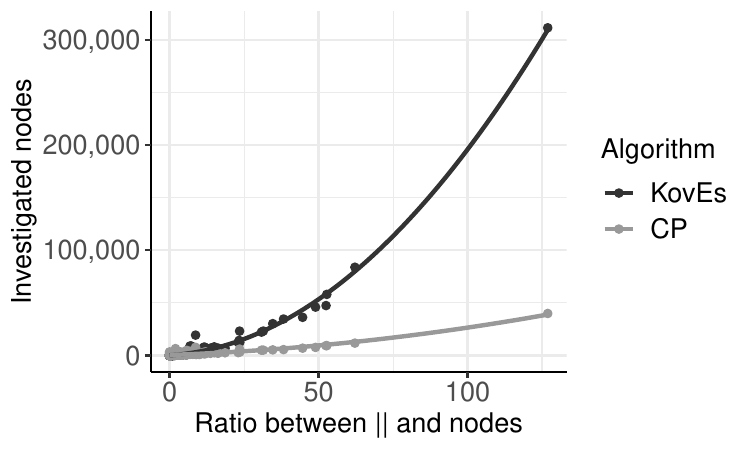}
		\end{minipage}
		\hfill 
	\caption{The number of investigated nodes during computation compared to the number of pairs in concurrency relation (left) and the ratio between the number of pairs and the number of nodes (right).}
	\label{fig:VisNodesPairs}
\end{figure}

\begin{figure}[tb]
	\centering
	\includegraphics[width=0.8\textwidth]{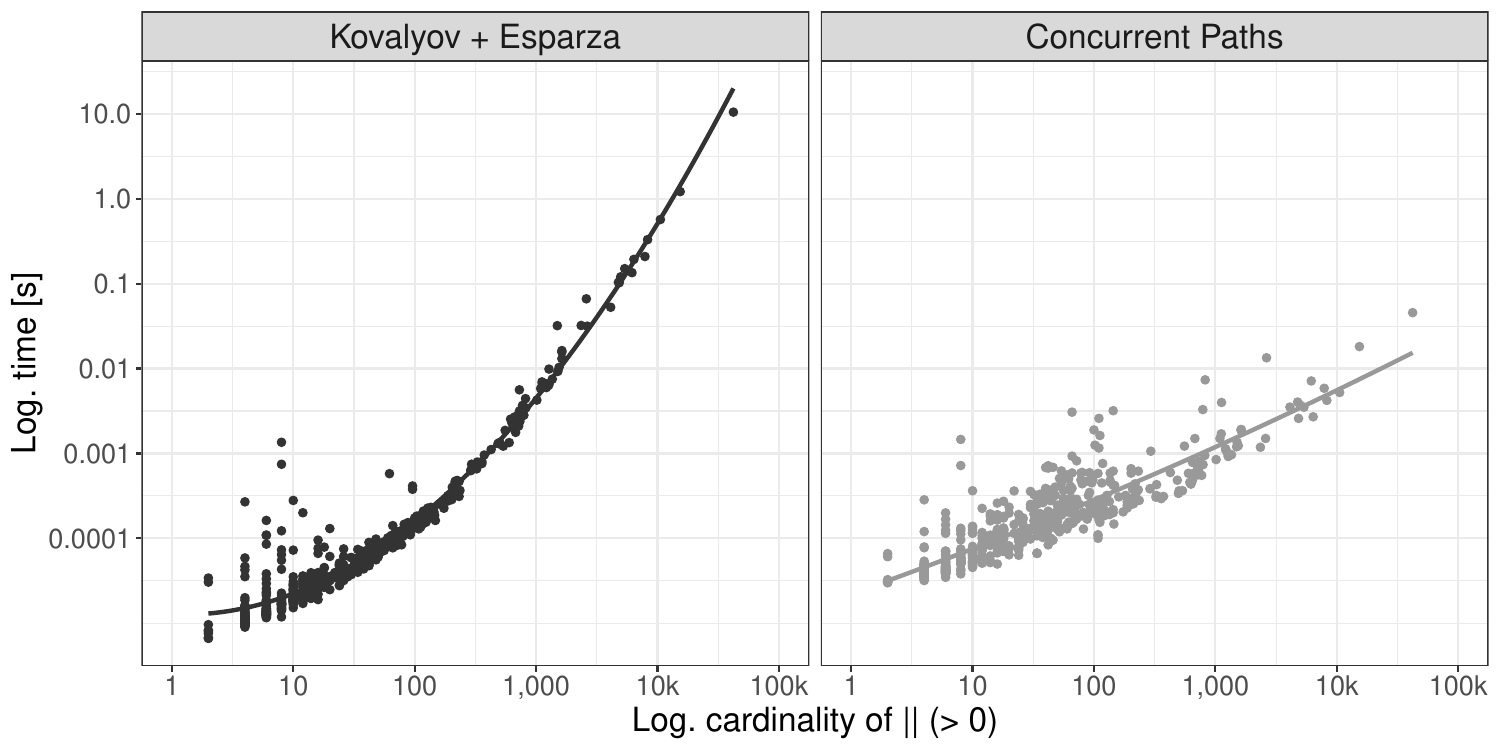}	
	\caption{Computation time regarding the number of pairs of nodes in concurrency relation.}
	\label{fig:TimePairs}
\end{figure}

Only 40 out of 644 nets of the IBM dataset are cyclic. The acyclic nets are investigated in 241 [ms] with \emph{CP} and in 14,084 [ms] with \emph{KovEs}. In fact, one net with approx. 42k concurrent pairs alone needs 10,500 [ms] for \emph{KovEs} but just 45 [ms] with \emph{CP}. \emph{CP} is faster than \emph{KovEs} for 87 nets. In sum, \emph{CP} needs 171 [ms] and \emph{KovEs} needs 14,06 [ms] for these 87 nets (speedup factor approx. 82), whereas for the other 557 nets, \emph{CP} needs 114 [ms] and \emph{KovEs} needs 41 [ms] (slowdown factor approx. 3). The 87 nets comprise approx. 172k of pairs in relation against approx. 20k for the others. Therefore, the 87 nets have a much stronger computational intensity. \emph{KovEs} is faster for the cyclic nets with 16 [ms] against 43 [ms] for \emph{CP}. This is reasonable because of three reasons: (1) \emph{CP} has a higher start up time especially for cyclic nets by performing a quadratic loop decomposition, (2) \emph{CP} has the same worst-case cubic runtime complexity like \emph{KovEs}, and (3) the ratio of concurrent nodes to the number of nodes is small for the 40 nets; there are just approx. 2.5 times of nodes in relation compared to the total number of nodes (i.\,e., a net with 100 nodes has around 250 pairs of nodes being concurrent). If this ratio is higher, \emph{CP} benefits against \emph{KovEs}. For showcasing this, the acyclic net with approx. 42k pairs of nodes in relation was surrounded with a simple loop. Although \emph{CP} has to perform loop decomposition, the computation times are similar to those of the acyclic case --- it shows that \emph{CP} is more efficient than \emph{KovEs} if a net contains many concurrent pairs of nodes.


\section{Conclusion}
\label{sec:Conclusion}

Concurrency detection identifies pairs of nodes that may be executed in parallel. Knowing concurrent places and transitions in Petri nets is essential to understand their behavior and is crucial as the base for ongoing analysis. Since such nets are potentially large and complex with many pairs of concurrent nodes, efficient algorithms are necessary. This paper extends the palette of concurrency detection algorithms with the \emph{Concurrent Paths} (CP) algorithm for sound free-choice workflow nets. For acyclic nets, the algorithm performs in quadratic time $O(P^2 + T^2)$ with $P$ the number of places and $T$ the number of transitions of a net. The algorithm requires a cubic time complexity in the worst-case for cyclic nets, $O(P^3 + PT^2)$. Although this seems not to be an improvement of the algorithm of Kovalyov and Esparza (\emph{KovEs}) (which needs a cubic time complexity for live and bounded free-choice nets), parallelizing \emph{CP} is straight-forward and the worst-case of \emph{CP} appears significantly less frequent than the worst-case of \emph{KovEs} and can be assumed to have just a constant impact on computation time for most nets. An evaluation of \emph{CP} on a benchmark of nets showed strong benefits on nets with a high degree of concurrency and only small disadvantages on nets with a low degree of concurrency. 

This paper enables Petri net analysis to be more efficient, especially, in cases of a high degree in concurrency of nets. Although single nets could be analyzed efficiently with \emph{KovEs}, performing concurrency detection on a large set of nets (e.g., for indexing in a database) may require much time. \emph{CP} reduces the effort and enables more efficient strategies to compute other properties of nets such as causality, exclusivity, etc. As a side effect, related research areas such as business process management and information systems research profit from the new technique as they utilize nets for analysis.

For future work, we plan to apply \emph{CP} to efficiently derive all relations in the 4C spectrum \cite{DBLP:conf/apn/PolyvyanyyWCRH14} for sound free-choice workflow nets. This would allow for an efficient indexing of nets in querying languages for nets, which is especially beneficial in business process management.

\bibliographystyle{acm}
\bibliography{arxivbibliography}

\end{document}